\newtheorem{theorem}{Theorem}
\newtheorem{lemma}[theorem]{Lemma}
\newtheorem{proposition}[theorem]{Proposition}
\theoremstyle{definition}
\theoremstyle{problem}
\newtheorem{definition}{Definition}
\newtheorem{example}{Example}
\newtheorem{problem}{Problem}
\newcommand{\payoff}{{{\phi}}}
\newcommand{\bid}{{{b}}}
\newcommand{\coal}{{{\mathcal{C}}}}
\newcommand{\tot}{{{\mathit{tot}}}}
\newcommand{\naturals}{{{\mathbb{N}}}}
\newcommand{\fpt}{{\mathrm{FPT}}}
\newcommand{\wone}{{\mathrm{W[1]}}}
\newcommand{\cost}{{{c}}}
\newcommand{\salary}{{{\phi}}}
\newcommand{\minsalary}{{{\phi_{i}^{\min}}}}
\newcommand{\iminsalary}[1]{{{\phi_{#1}^{\min}}}}
\begin{document}

\title{Cooperation and Competition when Bidding for Complex Projects: Centralized and Decentralized Perspectives}
\date{}

\author{Piotr Skowron \\
   University of Oxford \\
   Email: piotr.skowron@cs.ox.ac.uk \\
\and
   Krzysztof Rzadca \\
   University of Warsaw \\
   Email: krzadca@mimuw.edu.pl
\and
   Anwitaman Datta \\
   Nanyang Technological University \\
   Email: anwitaman@ntu.edu.sg
}

\maketitle

\begin{abstract}
To successfully complete a complex project, be it a construction of an airport or of a backbone IT system, agents (companies or individuals) must form a team having required competences and resources. A team can be formed either by the project issuer based on individual agents' offers (centralized formation); or by the agents themselves (decentralized formation) bidding for a project as a consortium---in that case many feasible teams compete for the contract. We investigate rational strategies of the agents (what salary should they ask? with whom should they team up?). We propose concepts to characterize the stability of the winning teams and study their computational complexity.
\end{abstract}

\section{Introduction}

Consider a complex project: involved, intricate, and consisting of many varied yet interrelated parts. The successful completion of such a project requires coordinated cooperation of a number of experts---people and companies---often organized as teams of subcontractors~\cite{Williams1999269}. For instance, in the construction industry, to build an apartment building (a rather standard endeavor), typically, 30 to 40 individual sub-contractors are involved in 100 to 150 separate activities~\cite{walshConstuction}.

Assigning sub-tasks of a complex project to subcontractors is common. 
In the UK, the proportions of construction employees employed by sub-contractors in years 1983--1998 has grown by 20\%~\cite{doi:10.1142/S1609945103000406}. In the UK, between 2008 and 2011, the number of freelancers increased by 12\%; in Australia in 2012, 17.2\% of the workforce were self-employed (8.5\% as independent contractors). These are only a few examples of a growing tendency to develop  projects by employing many specialized sub-contractors instead of a single company.

Nevertheless, it is not clear how to organize the market both for the issuer of the project (in this paper, called the client) and the subcontractors (later called the agents).  Interaction between the agents applying for the employment in a project and the client is captured by the \emph{hiring a team} problem~\cite{Chen:2007:CLE:1283383.1283459, Iwasaki:2007:FMH:1781894.1781924, Chen:2010:FMD:1917827.1918350}. The agents have private costs of participating in the project and may have different skills, thus only certain teams are able to complete the project on time. The client organizes an auction in which individual agents place their bids, i.e., their required salaries. After collecting the bids, the client selects the cheapest feasible team, i.e., the set of agents able to complete the project on time with the lowest total bid.

We generalize the hiring a team problem by exploring two organizations of the market. The original approach corresponds to the \emph{centralized setting}: the agents communicate only with the client by issuing their required salaries (also referred to as bids) and it is the client's responsibility to select an appropriate team.
Our main contribution lies in considering a different organization of the market, where \emph{the agents first form teams and then bid for the project as consolidated groups} rather than as individuals. Since the organization of the agents into teams is not managed by a central entity, we refer to this setting as \emph{decentralized}. To the best of our knowledge this formulation of the problem is novel and leads to a new class of games. This formulation has a natural interpretation: a client may not want to coordinate a project and to deal with individual subcontractors, but instead expects that the subcontractors coordinate among themselves and propose a bid for completing the whole project.

Additionally, we generalize the hiring a team problem by considering two types of agents' compensation. In the \emph{project salary model} (corresponding to the original approach) the agents are payed for their work irrespectively of the contributed effort. We propose a new payment model, the \emph{hourly salary}, in which the agents are payed for the time spent working on the project.


Throughout this paper we assume that we are given an oracle that, for a given team, can determine whether this team is feasible, i.e., whether it can successfully complete the project. In particular, given a budget and the requested agents' salaries, the oracle can be asked to find a feasible team or to find the cheapest feasible team. In Appendix~\ref{sec:findingFeasible} we show how to create such an oracle for a concrete scheduling model (which moreover generalizes a commodity auction); we also determine its exact complexity.

Our approach generalizes two models: commodity auctions~\cite{mcCabeThesis} and path auctions~\cite{Nisan:1999:AMD:301250.301287} (Appendix~\ref{sec:findingFeasible}).
In a commodity auction, there is a set of items $I = \{i_1, i_2, \dots, i_q\}$ and agents owning certain subsets of $I$. A team is feasible if the agents have together all the items from $I$. A commodity auction can be mapped to our problem by considering that $I$ is a set of independent activities; an agent owning a subset corresponds to an agent having skills to complete these activities.
In a path auction, there is a graph $G$ with two distinguished vertices: a source $s$ and a target $t$. The agents correspond to the vertices in the graph; some vertices are connected by edges. A team is feasible if the participating agents form a path from $s$ to $t$.

Since we consider teams of agents with sufficient skills, our model resembles cooperative skill games~\cite{Bachrach20131} and coalitional resource games~\cite{Wooldridge:2006:CCC:1148349.1148351}. These games, however, consider the stability of the \emph{grand coalition} and interaction between its members. In contrast, our approach is to \emph{expose the competition between multiple teams}. Thus, we do not apply the typical cooperative game theory concepts and, instead, model agents' cooperation and competition as a non-cooperative game (see Appendix~\ref{sec:otherSolConcepts} for a detailed comparison). Our approach is thus closer to endogenous formation of coalitions~\cite{hart1983endogenous, bloch1996sequential, ray1999theory}.

In this paper we identify and formalize a new class of coalition games, which are the extensions of the hiring a team games. We propose concepts to characterize the stability of the winning teams and study their computational complexity. Table~\ref{tab:summary} summarizes our results. All the proofs omitted from the main text are provided in the appendix.

\begin{table}[t]
\small
\caption{Summary of our results. ``Existence'' denotes whether a team/equilibrium always exists. ``Checking'' gives the complexity of checking whether a given team satisfies the definition. ``Finding''  gives the complexity of finding a team/equilibrium. $\textit{FFT}$ and $\textit{FCFT}$ are the complexities of the problems \textsc{FFT} and \textsc{FCFT}, respectively. The symbols $\dagger$ (respectively, $\mathsection$) denotes that a result is valid only in the project (respectively, hourly) salary model. The symbol $\Diamond$ denotes that a result is valid only if the salaries of the agents are rational numbers. Whenever one of the symbols $\dagger, \mathsection$ or $\Diamond$ is provided, it means that the problem for the other cases is still open.}
\label{tab:summary}
\begin{center}
\begin{tabular}{p{0.1cm}p{5.5cm}|c|c|c}
\multicolumn{2}{c|}{solution concept} & Existence & Checking & Finding \\ \cline{1-5}
\multicolumn{1}{ c }{\begin{rotate}{270}\hspace{-0.7em}decentral.\end{rotate}}
& rigor. strongly winning team & Not always & $O(n^2 \cdot \textit{FCFT})$ & $O(n^5\log(nv)\textit{FCFT})$ $\dagger$ $\Diamond$\\ \cline{2-5}
& strongly winning team & Not always & \multicolumn{2}{ |c }{open problem}\\ \cline{2-5}
& weakly winning team & Always & \multicolumn{2}{ |c }{$O(n^5\log(nv)\textit{FCFT})$ $\dagger$ $\Diamond$}\\ \cline{2-5}
& auction winning team & Always & $O(\textit{FFT})$ & $O(v \cdot \textit{FFT})$ \\ \cline{1-5}
\multicolumn{1}{ c }{\begin{rotate}{270}\hspace{-0.7em}central.\end{rotate}}
& winning team (with asking salaries) & N/A & \multicolumn{2}{ |c }{$O(\textit{FCFT})$} \\ \cline{2-5}
& Strong Nash Equilibrium & \pbox{20cm} {Always $\dagger$ \\ Not always $\mathsection$} & $O(\textit{FCFT})$ & $O(n^3\log(nv)\textit{FCFT}))$ $\dagger$ $\Diamond$\\ 
\end{tabular}
\end{center}
\end{table}

\section{A Complex Project as a Game: a Formal Model}\label{sec:model}

We consider a game in which a client (an issuer) submits a single complex project.
The client has a certain valuation $v$ of the project that is the maximal price that she is able to pay for completing the project.

There is a set $N = \{1, 2, \dots n\}$ of $n$ agents.
For each agent $i$, we define $\minsalary > 0$ to be the agent's \emph{minimal salary} for which $i$ is willing to work. This minimal salary may correspond to the agent's personal cost of participating in the project. The agent prefers to work for $\minsalary$ than not to work (and then to work for higher salary).
The value $\minsalary$ is private to the agent---neither the issuer nor the other agents know $\minsalary$.

A subset of the agents' population $N$ forms a team to work on the project; the paper's core contribution is on how this process should be organized. 
A team $\coal$ is a triple $\langle N_{\coal}, \salary_\coal, \cost_\coal \rangle$ consisting of: the set of participating agents $N_{\coal} \subseteq N$;  
a salary function $\salary_{\coal}: N_{\coal} \rightarrow \naturals$ assigning salaries to member agents; 
and the total cost of the team $\cost_{\coal} \in \naturals$---the total amount of money earned by the participants of $\coal$. Salaries are discrete (not only money is discrete, but also it is common in real-world auctions to specify a minimal difference between two successive bids). However, to derive some computational results, in some clearly marked places, we assume that the salaries can be rational numbers.

The same team may organize the work of its members on the project in various ways with varying efforts from participants. To capture this property, we introduce a notion of a \emph{schedule}, $\sigma_{\coal}: N_{\coal} \rightarrow \naturals$, that assigns to each member of a team the amount of time this agent needs to spend on the project. Of course, there may exist many  schedules for a single team. We expand the discussion on the notion of schedule in Appendix~\ref{sec:findingFeasible}.

We consider two models of agents' compensation. 
Let $\salary_{\coal}^{\tot}(i)$ denote the total amount of money agent $i$ gets in team $\coal$  (naturally, $\cost_{\coal} = \sum_{i \in N_{\coal}} \salary_{\coal}^{\tot}(i)$). 
In the \emph{project salary} model $\salary_{\coal}^{\tot}(i)$ is equal to the salary of the agent $\salary_{\coal}(i)$ (and thus does not depend on the amount of work assigned to that agent).
In the \emph{hourly salary} model  $\salary_{\coal}^{\tot}(i)$ is equal to the product of the salary $\salary_{\coal}(i)$ and the time $t_i$ during which $i$ processes her part of the project ($t_i$ is known from the schedule).

In the project salary model the agents are interested in earning as much money as possible. 
The hourly salary model represents agents who are interested in having the highest possible hourly wage; thus, e.g., an agent prefers to work $t_i = 1$ time unit with a salary $\salary_i=3$ to working $t_i = 2$ time units with a salary $\salary_i=2$.

Different schedules might result in different completion times of the project. If the schedule results in a completion time that is satisfactory for the client, we say that the schedule is \emph{feasible}. For some teams there might not exist a feasible schedule (e.g., if the members lack certain skills). We assume that there is an oracle that can answer whether a given schedule is feasible or not. This very general setting can be instantiated by providing a concrete oracle. For instance, in Appendix~\ref{sec:findingFeasible} we show that by appropriately specifying the oracle, our results can be applied to commodity auctions and to path auctions. We also show there how to replace the general oracle with a concrete scheduling model.

A team $\coal$ is \emph{feasible} iff (i) the asking salaries are no-lower than the minimal salaries, $\salary_{\coal}(i) \geq \minsalary$; and there exist a feasible schedule such that: (ii) the project budget is not exceeded ($\cost_\coal \leq v$), and (iii) the cost $\cost_{\coal}$ of the team $\coal$ is consistent with the salaries $\salary_\coal$. 
Specifically, in the project salary model $\cost_{\coal} = \sum_{i \in N_{\coal}} \salary_{\coal}(i)$. 
In the hourly salary model $\cost_{\coal} = \sum_{i \in N_{\coal}} t_i \salary_{\coal}(i)$.

A team $\coal$ is \emph{cheaper} than $\coal'$ if it has a strictly lower cost  $\cost_{\coal} < \cost_{\coal'}$ or if it has the same cost, but it is preferred by a deterministic tie-breaking rule $\prec$, $N_{\coal} \prec N_{\coal'}$ (for the sake of concreteness we assume that $\prec$ is the lexicographic order in which a team is represented by a concatenation of the sorted list of the names of its members).

Throughout this paper we use the \textsc{Find Feasible Team (FFT)} and \textsc{Find Cheapest Feasible Team (FCFT)} problems.
\begin{problem}
An instance of  Find Feasible Team (FFT) consists of a project (with a budget $v$) and the set $N$ of the agents with (known) minimal required salaries $\minsalary$.
The question is to find some feasible team or to claim there is no such. In the Find Cheapest Feasible Team (FCFT) we ask for the cheapest feasible team.
\end{problem}

\section{Centralized Formation of Teams}\label{sec:nonStrategic}

In the centralized model agents submit their asking salaries $\salary_i$ directly to the client. The client, having the asking salaries, wants to form the cheapest feasible team. We first show that this problem reduces to \textsc{FFT}, the problem of finding a feasible team. Then, we analyze the optimal bidding strategies of agents.

\begin{proposition}\label{prop:centralWinningTeamRed}
The problem \textsc{FCFT} can be solved in time $O((\log v + n)\textsc{FFT})$, where $\textsc{FFT}$ is the complexity of the problem \textsc{FFT}. Having the asking salaries of the agents, the problem of finding the winning team can be solved in time $O(\textit{FCFT})$, where $\textit{FCFT}$ is the complexity of the problem \textsc{FCFT}.
\end{proposition}

The agents may behave strategically and manipulate their asking salaries to maximize their payoffs. 
We model this problem as a strategic game. An action of agent $i$ is her asking salary $\salary_i \geq \minsalary$. The payoff of $i$ is $\salary_i$ iff $i$ is a member of the cheapest feasible team; otherwise the payoff of $i$ is 0. 

Interestingly, in the project salary model, there exist sets of vectors of actions which are stable against collaborative strategies of the agents. We recall that a vector of the agents' actions is a Strong Nash Equilibrium (SNE) if no subset of the agents can change its actions so that all the deviating agents obtain strictly better payoffs. 

For each subset of the agents $N' \subseteq N$, by $\coal^{*}(N')$ we denote the cheapest feasible team using only the agents from $N'$ 
(if there is no feasible team consisting of the agents from $N'$, the team $\coal^{*}(N')$ does not exist).

\begin{theorem}\label{thm:existenceOfSNE}
In the project salary model, if there exists a feasible team then there exists a Strong Nash Equilibrium. In every SNE, the set of the agents who get positive payoffs is the set of agents forming the cheapest feasible team, $N_{\coal^{*}(N)}$.
\end{theorem}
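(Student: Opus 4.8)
The plan is to exhibit a canonical bid profile, prove it is a Strong Nash Equilibrium, and then show that every SNE pays exactly the agents of $C^{*}:=N_{\coal^{*}(N)}$. I would first record two facts, writing $c^{*}:=\cost_{\coal^{*}(N)}$. Since in the project salary model a coalition's cost is the sum of its members' salaries, minimality of $\coal^{*}(N)$ forces $c^{*}=\sum_{i\in C^{*}}\iminsalary{i}$ and forces $C^{*}$ to be an inclusion-minimal agent set admitting a feasible schedule within budget (in particular, no proper subset of $C^{*}$ is feasible); and under \emph{any} bids with $\salary_i\ge\iminsalary{i}$, every feasible coalition costs at least $c^{*}$. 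The second fact is the engine of the proof: if the bids of all agents outside a set $A$ are frozen, then the set $\mathcal{R}_A$ of bid vectors for $A$ at which $A$ is the (tie-break-)winning feasible coalition is downward closed in every coordinate --- lowering $\salary_{i_0}$ for $i_0\in A$ relaxes $A$'s budget constraint, shifts the comparison of $A$ with any coalition avoiding $i_0$ in $A$'s favour and leaves the comparison with any coalition using $i_0$ unchanged, and any coalition newly fitting the budget already had total above the budget, hence above $A$'s total. As bids are integral and $\mathcal{R}_A$ is bounded by $v$, whenever it is nonempty it has maximal elements, and from such an element no subset of $A$ can jointly raise its bids and keep $A$ winning.

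For existence I would take the profile in which every agent outside $C^{*}$ bids its minimal salary and the agents of $C^{*}$ bid a maximal element of $\mathcal{R}_{C^{*}}$ (nonempty, since $C^{*}$ wins at minimal salaries within budget): then $C^{*}$ wins and each member is paid its own bid. Given a would-be blocking coalition $D$ with resulting winner $W$, every $i\in D$ lies in $N_W$ and bids at least its previous bid, strictly more if $i\in C^{*}$. If $N_W=C^{*}$, then $D\subseteq C^{*}$, the others still bid their minimal salaries, and the strict raises on $D$ violate maximality of the chosen element of $\mathcal{R}_{C^{*}}$. If $N_W\subsetneq C^{*}$, then $N_W$ is feasible at minimal salaries with cost below $c^{*}$, contradicting minimality of $\coal^{*}(N)$. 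If $N_W\supsetneq C^{*}$, then at the new bids $C^{*}$ is strictly cheaper than $W$, or else both exceed the budget and $W$ is infeasible --- either way impossible. The remaining case, $N_W$ incomparable with $C^{*}$, I would settle by writing ``$W$ is no costlier than $C^{*}$'' at the post-deviation bids and ``$C^{*}$ is no costlier than $N_W$'' at the equilibrium bids, cancelling the members of $C^{*}\cap N_W$ from each inequality, and substituting $\sum_{i\in N_W}\iminsalary{i}\ge c^{*}$; this pins $C^{*}$ and $N_W$ to a tie at both profiles, so the deterministic tie-break would have to order them in both directions. If the post-deviation $C^{*}$ is over budget, one first uses ``no proper subset of $C^{*}$ is feasible'' to exhibit a feasible competitor standing in for $C^{*}$ in the comparison.

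For the characterization, in any SNE the paid agents are exactly the members of the winning coalition $\coal$, so it suffices to show $N_{\coal}=C^{*}$. The cases $N_{\coal}\supsetneq C^{*}$ and $N_{\coal}\subsetneq C^{*}$ are ruled out exactly as above. If $N_{\coal}$ is incomparable with $C^{*}$, I would let the agents of $C^{*}$ jointly move to a maximal element of $\mathcal{R}_{C^{*}}$ computed with the other bids frozen; the same cancellation together with $\sum_{i\in N_{\coal}}\iminsalary{i}\ge c^{*}$ shows that this element can be chosen so that every member of $C^{*}$ --- including those already inside $\coal$ --- is paid strictly more than at present, so all of $C^{*}$ strictly gains, contradicting the SNE property. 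The hard part throughout is this incomparable case and its budget-tight sub-case: the comparable cases are one-liners, but there one must manage the tie-break rule, the budget, integrality of the bids, and the three quantities $\sum_{i}\iminsalary{i}$ over $C^{*}$, over $N_W$, and over their overlap, all at once.
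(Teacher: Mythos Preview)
Your existence argument is essentially the paper's: fix the non-members of $C^{*}$ at their minimal bids and let the members of $C^{*}$ bid maximally subject to $C^{*}$ still winning. The paper phrases this as a system of linear inequalities (one per competing coalition) and takes a Pareto-maximal solution; you phrase it as a Pareto-maximal element of the downward-closed region $\mathcal{R}_{C^{*}}$. Your case split on the shape of $N_W$ versus $C^{*}$ is more explicit than the paper's treatment, and your worry about the ``post-deviation $C^{*}$ over budget'' sub-case in the incomparable branch is in fact vacuous: since $N_W$'s equilibrium cost is at most $N_W$'s post-deviation cost $\le v$, the inequality $\sum_{i\in C^{*}\setminus N_W}\salary_i^{*}\le\sum_{i\in N_W\setminus C^{*}}\iminsalary{i}$ already holds at equilibrium, whence $C^{*}$'s post-deviation cost is bounded by $W$'s post-deviation cost $\le v$. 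So that hint about ``exhibiting a feasible competitor'' is unnecessary, but the existence half is correct.

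The characterization half has a real gap. You let \emph{all} of $C^{*}$ deviate to a point of $\mathcal{R}_{C^{*}}$ and claim that this point can be chosen so that each agent in $I:=C^{*}\cap N_{\coal}$ is paid \emph{strictly} more than its current salary $\salary_i$. The cancellation you invoke, together with $\sum_{i\in N_{\coal}}\iminsalary{i}\ge c^{*}$, controls only the comparison between $C^{*}\setminus N_{\coal}$ and $N_{\coal}\setminus C^{*}$; it says nothing about the room left for the agents of $I$ once the budget is imposed. Concretely, take agents $a,b,c$ with unit minimal salaries, $v=2$, and schedule-feasible sets $\{a,b\}$ and $\{b,c\}$, so $C^{*}=\{a,b\}$. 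If the current profile has $\salary_a\ge 2$, $\salary_b=\salary_c=1$, then $\{b,c\}$ wins and $I=\{b\}$; but with $c$ frozen at $1$ one computes $\mathcal{R}_{\{a,b\}}=\{(1,1)\}$, so $b$ cannot be paid more than its current $1$. Your deviation therefore fails to give every member of $C^{*}$ a strict improvement, and the contradiction does not fire. (The same obstruction persists with $|I|\ge 2$ and strict inequality $c^{*}<\sum_{i\in N_{\coal}}\iminsalary{i}$: the budget slack can be smaller than $|I|$.)

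The paper avoids this by choosing a different deviating set: only the agents of $C^{*}\setminus N_{\coal}$ deviate, each to its minimal salary. Every deviator then starts from payoff $0$, so one needs only that they end up in the winning coalition --- a far weaker requirement than the strict increase you need on $I$. After this deviation $C^{*}$ is no more expensive than $N_{\coal}$ (with the correct tie-break), which is enough to conclude. If you want to keep your framework, the clean fix is to replace the deviating set $C^{*}$ by $C^{*}\setminus N_{\coal}$; the down-closure of $\mathcal{R}_{C^{*}}$ and your cancellation then suffice, and the members of $I$ never have to move at all.
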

\begin{proof}
Let $N^{*} = N_{\coal^{*}(N)}$ be the set of the agents participating in the cheapest feasible team. We say that the action $\salary_i$ of the agent $i$ is \emph{minimal} if and only if $\salary_i = \iminsalary{i}$.
We show how to construct the asking salaries $\salary_i^{*}$ of the agents from $N^{*}$ that, together with the minimal actions of the agents outside $N^{*}$, form a Strong Nash Equilibrium. 
A sketch of the proof is as follows. We show the set of linear inequalities for the variables $\salary_i, i \in N^{*}$. Let us denote the maximal values of $\salary_i$ which satisfy the inequalities as $\salary_i^{*}$ (maximal in the sense that if we increase any value $\salary_i^{*}$, then the new values will not satisfy all the inequalities any more). We show that the actions $\salary_i^{*}$ of the agents from $N^{*}$, together with the minimal actions of the agents outside of $N^{*}$, form an SNE and that the set of the solutions $\salary_i^{*}$ that satisfy all the inequalities is nonempty.

The first inequality states that the values $\salary_i$ must lead to a feasible solution:
\begin{align}
\sum_{i \in N^{*}} \salary_i \leq v \textrm{.}
\label{inq:salary}\end{align}
Next, as $\coal^*$ is the cheapest feasible team, for each feasible team $\coal'$ ($N^{*} \neq N_{\coal'}$) such that $N^{*} \prec N_{\coal'}$, $\coal^*$ must have (weakly) lower cost:
\begin{align}
\sum_{i \in N^{*} \setminus N_{\coal'}} \salary_i \leq \sum_{i \in N_{\coal'} \setminus N^{*}} \minsalary \textrm{.}
\label{inq:cheap1}\end{align}
For a $\coal'$ preferred over $\coal^*$ ($N^{*} \neq N_{\coal'}$ and $N_{\coal'} \prec N^{*}$), $\coal^*$ must have strongly lower cost:
\begin{align}
\sum_{i \in N^{*} \setminus N_{\coal'}} \salary_i < \sum_{i \in N_{\coal'} \setminus N^{*}} \minsalary \textrm{.}
\label{inq:cheap2}\end{align}

First, if the values $\salary_i^{*}$ satisfy inequalities (\ref{inq:salary})-(\ref{inq:cheap2}) and the agents outside of $N^{*}$ play their minimal actions, then the agents from $N^{*}$ will get positive payoffs.
If they did not get the positive payoffs, it would mean that there exists a feasible cheaper team $\coal'$. However, inequalities (\ref{inq:cheap1})-(\ref{inq:cheap2}) imply that the agents from $N^{*} \setminus N_{\coal'}$ induce the lower total cost than the total cost of the agents from $N_{\coal'} \setminus N^{*}$; this ensures that agents $N^{*}$ with actions $\salary_i^{*}$ form a cheaper team than $\coal'$.

Next, we show that no set of agents $N_{\coal'}$ can make a collaborative action $\overline{\salary}$, after which the payoff for all $N_{\coal'}$ agents will be greater than previously. By contradiction, assume that there exists such a set of agents $N_{\coal'}$ and such an action $\overline{\salary}$. First we consider the case when the payoff of some agent $i \notin N^{*}$ would change. This means that after $\overline{\salary}$ there would be a new cheapest feasible team $\coal'$, where $i \in N_{\coal'}$. However, we know that the total cost of the agents from $N^{*} \setminus N_{\coal'}$ is lower than the total cost of the agents from $N_{\coal'} \setminus N^{*}$. This means that $\coal'$ cannot be cheaper than the team consisting of the agents from $N^{*}$. Finally, consider the case when only the payoffs of the agents from $N^{*}$ change (and thus $N_{\coal'} \subseteq N^{*}$). 
However, if the strict subset of $N^{*}$ could form a feasible team, then $\coal^{*}(N)$ would not be the cheapest. Thus, $N_{\coal'} = N^{*}$.
This means that every agent from $N^{*}$ must have played a higher action (and others must have not changed their actions). Since $\salary_i^{*}$ were maximal, this means that after the action $\overline{\salary}$ some inequality, for some feasible team $\coal''$, would not hold any more. Thus, we infer that $\coal''$ is cheaper than $\coal'$.

To check that there always exists a solution, we see that the definition of $N^{*}$ ensures that the values $\salary_i^{*} = \minsalary$ satisfy all inequalities. 

Finally, by contradiction we prove the $N^*$ is formed by the same agents as forming the cheapest team. Assume that the set of the agents that get positive payoffs in some SNE is $N' \neq N^{*}$. However, if the agents from $(N^{*} \setminus N')$ play their minimal actions, then the team consisting of the agents from $N^{*}$ would be cheaper than the team consisting of the agents from $N'$. Thus, the agents from $(N^{*} \setminus N')$ can deviate, getting better payoffs. This completes the proof.
\end{proof}

Interestingly, there is no analogous result for hourly salary model (see Proposition~\ref{prop:sneNotExistsInHourly} in Appendix~
\ref{app:proofs}). 
The proof of Theorem~\ref{thm:existenceOfSNE} is constructive, but it requires considering all feasible teams and, so, leads to potentially high computational complexity. Finding an efficient algorithm for the problem of finding Strong Nash Equilibria in the project salary model is open.
On the other hand, if the salaries of the agents can be rational numbers, we can find the salary function in SNE by a polynomial reduction to the \textsc{FCFT} problem. This result is particularly meaningful if the salaries have high granularity; rounding such a rational solution gives an integral solution which is nearly perfect.

\begin{proposition}\label{proptwo}
In the project salary model, if the salaries are rational, then finding a Strong Nash Equilibrium can be solved in time $O(n^3\log(nv)\textit{FCFT}))$, where $\textit{FCFT}$ is the complexity of the problem \textsc{FCFT}. Checking whether a given vector of the asking salaries $\langle \salary_i \rangle, i \in N$ is a Strong Nash Equilibrium can be solved in time $O(\textit{FCFT})$, where $\textit{FCFT}$ is the complexity of the problem \textsc{FCFT}.
\end{proposition}

\section{Decentralized Formation of Teams}\label{sec:winningCoalitions}

If the agents can communicate and coordinate their strategies, they form teams and bid for the project as consortiums. We propose the concept of a (rigorously) strongly winning team, in which no subset of agents can successfully deviate. We show how to characterize (rigorously) strongly winning teams and how to reduce the problem of finding them to the \textsc{FCFT} problem. We show that the strongly winning teams may not exist, and so we introduce the concept of a weakly winning team. We prove that a weakly winning team always exists (provided that there is a feasible team). We demonstrate how to reduce the problem of finding weakly winning teams to the \textsc{FCFT} problem.

We model the behavior of the agents as a strategic game. Agent $i$'s action is a triple $\langle N_{\coal}, \salary_{\coal}, \bid_\coal \rangle$. Intuitively, such an action means that the agent $i$ decides to enter the team $\coal = \langle N_{\coal}, \salary_{\coal}, \bid_\coal \rangle$. The payoff of the agent is equal to $\salary_{\coal}(i)$ if (i) $\coal$ is feasible, (ii) each agent $j \in N_{\coal}$ agrees to participate in $\coal$ (i.e., they all play $\coal$, and their payoffs are consistent with the bid of the team $\bid_\coal$), and (iii) there is no feasible cheaper team $\coal'$ such that all the agents from $N_{\coal'}$ agree to participate in $\coal'$. Otherwise, the payoff of $i$ is 0.

\subsection{Strongly Winning Teams}

As the payoffs depend on whether the others agree to cooperate, 
rather than the Nash Equilibrium, the Strong Nash Equilibrium (SNE) should be used.
In the following definition we propose an even more stable equilibrium concept---the Rigorously Strong Nash Equilibrium (RSNE), which requires that no subset of agents can deviate such that each agent gets a payoff \emph{at least as good} as its payoff before deviating (instead of SNE's strictly better). Our approach is motivated by cautious agents. In an SNE, the agents have no incentive to deviate if they get the same payoff; however they also have no incentive not to deviate. Yet, any deviation will result in a serious payoff loss for some agents (changing their payoffs from a positive $\salary$ to zero). A cautious agent will prefer not to be exposed to the possibility of such a loss.

\begin{definition}
The vector of actions $\pi$ is a \emph{Rigorously Strong Nash Equilibrium (RSNE)} iff there is no subset of agents $N_{\coal}$ such that the agents from $N_{\coal}$ can make a collaborative action $\coal$ after which the payoff of each agent $i$ from $N_{\coal}$ would be at least equal to her payoff under $\pi$ and the payoff of at least one agent $i \in N$ would improve.
\end{definition}

A RSNE requires that the payoff of at least one agent $i \in N$ must change as we treat as equivalent the teams with the same payoffs. For instance, in a game with three agents, $a$, $b$ and $c$, if the team $\{a,b\}$ gets a positive payoff, it does not matter whether $c$ plays $\langle \{c\}, v+1 \rangle $ or $\langle \emptyset, v+1 \rangle$: in both cases all payoffs are the same (recall that $v$ is the client's maximal budget for the project).

Below we introduce additional definitions that help characterize the RSNE in our games.

\begin{definition}\label{def:endangered}
A feasible team $\coal$ is \emph{explicitly endangered} by a team $\coal'$ if (i) $\coal'$ is feasible, (ii) $N_{\coal} \cap N_{\coal'} = \emptyset$ and (iii) $\coal'$ is cheaper than $\coal$.
A feasible team $\coal$ is \emph{implicitly endangered} by a team $\coal'$ if (i) $\coal'$ is feasible, (ii) $N_{\coal} \cap N_{\coal'} \neq \emptyset$ and each agent from $N_{\coal} \cap N_{\coal'}$ gets in $\coal'$ at least as good a salary 
as in $\coal$, and (iii) either $N_{\coal} \neq N_{\coal'}$ or $\salary_{\coal} \neq \salary_{\coal'}$.
\end{definition}

If there are agents belonging to both teams ($N_{\coal} \cap N_{\coal'} \neq \emptyset$), we do not consider the total cost of the alternative team $\coal'$, as the decision whether $\coal'$ will be formed depends solely on the agents from $N_{\coal} \cap N_{\coal'}$: if they decide to form $\coal'$, $\coal$ will not be formed, thus the client won't be able to choose between $\coal$ and $\coal'$.

A feasible team $\coal$ is \emph{(rigorously) strongly winning} iff there is a (Rigorously) Strong Nash Equilibrium in which the agents from $N_{\coal}$ get positive payoffs $\salary_{\coal}$.
The following theorem relates endangerment (Definition~\ref{def:endangered}) and a winning team.
\begin{theorem}\label{thm:rigorWinningCoal}
The team $\coal$ is rigorously strongly winning if and only if $\coal$ is not explicitly nor implicitly endangered by any team.
\end{theorem}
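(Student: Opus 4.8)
The plan is to prove the equivalence by working directly with strategy profiles of the decentralised coalition game: for one direction I read an endangering coalition off any profitable collective deviation, and for the other I manufacture a Rigorously Strong Nash Equilibrium (RSNE) from the absence of endangerment.

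For ``not endangered $\Rightarrow$ rigorously strongly winning'' I would exhibit an explicit witnessing profile: every agent of $N_\coal$ plays the action $\coal=\langle N_\coal,\salary_\coal,\bid_\coal\rangle$, and every $j\notin N_\coal$ plays the null action $\langle\{j\},v+1\rangle$, which is infeasible and can never win. Then $\coal$ is the only feasible fully-agreed coalition, so $N_\coal$ receives $\salary_\coal$ (positive since $\salary_\coal(i)\ge\minsalary>0$), and it remains to rule out a profitable deviation by a set $S$. I would case on $S\cap N_\coal$. If $S\cap N_\coal=\emptyset$ then $\coal$ is still fully agreed afterwards, so the winner can change only if some feasible fully-agreed $\coal^{*}$ cheaper than $\coal$ appears; since all its members must now be playing $\coal^{*}$ while the non-$S$ agents outside $N_\coal$ play null and no agent of $N_\coal$ moved, we get $N_{\coal^{*}}\subseteq S$ and $N_{\coal^{*}}\cap N_\coal=\emptyset$, i.e.\ $\coal^{*}$ explicitly endangers $\coal$ --- contradiction. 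If $S\cap N_\coal\neq\emptyset$, pick $i\in S\cap N_\coal$; for $i$ to weakly gain it must be in the new winner $\coal^{*}$ with $\salary_{\coal^{*}}(i)\ge\salary_\coal(i)$, and (after pruning vacuous ``deviators'', so a deviator of $N_\coal$ genuinely abandons $\coal$) $\coal^{*}\neq\coal$; every agent of $N_{\coal^{*}}\cap N_\coal$ must also be in $S$, hence weakly gains, hence earns in $\coal^{*}$ at least its $\coal$-salary --- so $\coal^{*}$ implicitly endangers $\coal$, again a contradiction.

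For ``rigorously strongly winning $\Rightarrow$ not endangered'' I would fix an RSNE $\pi$ in which $N_\coal$ gets $\salary_\coal$ --- so $\coal$ wins under $\pi$ and every other agent gets $0$ --- and assume some $\coal'$ endangers $\coal$. If the endangerment is explicit, the deviation ``all of $N_{\coal'}$ play $\coal'$'' works at once: $\coal$ is undisturbed (it is disjoint from $N_{\coal'}$), $\coal'$ becomes available and is cheaper than $\coal$, and every other surviving fully-agreed feasible coalition was already present under $\pi$ and hence more expensive than $\coal$, so $\coal'$ is the new winner and every agent of $N_{\coal'}$ jumps from $0$ to a positive salary. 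The implicit case is where I expect the real difficulty: ``$N_{\coal'}$ plays $\coal'$'' may fail on its own, because $\pi$ can contain a \emph{spoiler} --- a feasible fully-agreed coalition $\coal^{\dagger}$ disjoint from $N_{\coal'}$ but cheaper than $\coal'$ --- which would win instead of $\coal'$ and leave the agents of $N_{\coal'}\cap N_\coal$ at $0$. I would fix this by augmenting the deviation: besides $N_{\coal'}$ playing $\coal'$, every agent lying in some such spoiler plays its null action. The points to check are that every spoiler is disjoint from $N_\coal$ (any fully-agreed coalition under $\pi$ other than $\coal$ is), so its members were and stay at $0$; that this leaves $\coal'$ as the unique cheapest fully-agreed feasible coalition, hence the winner; and that each member of $N_{\coal'}$ then earns either a positive salary or at least its $\coal$-salary (by the implicit-endangerment condition), so all deviators weakly gain while at least one payoff changes --- contradicting that $\pi$ is an RSNE.

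One routine point I would settle along the way is that a declared member of a deviating coalition may be assumed to genuinely change its action (else prune it), which is what lets a deviator of $N_\coal$ ``break up'' $\coal$; and that ``$\coal^{*}\neq\coal$ as actions'' is precisely condition~(iii) of implicit endangerment ($N_{\coal^{*}}\neq N_\coal$ or $\salary_{\coal^{*}}\neq\salary_\coal$), which is immediate in the project-salary model where the bid is determined by the salaries.
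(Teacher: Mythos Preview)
Your proof is correct and follows essentially the same route as the paper's: for ``not endangered $\Rightarrow$ rigorously strongly winning'' you exhibit the profile where $N_\coal$ plays $\coal$ and everyone else plays a null action, then show any RSNE-violating deviation produces an explicitly or implicitly endangering coalition; for the converse you build a deviation out of the endangering coalition $\coal'$, augmenting it with null-playing outsiders so that $\coal'$ is guaranteed to win. The only cosmetic difference is in the implicit case of the converse: the paper has the entire set $(N\setminus N_\coal)\cup N_{\coal'}$ deviate, with every agent of $N\setminus(N_\coal\cup N_{\coal'})$ switching to the empty action $\langle\{\},\salary_\emptyset\rangle$, whereas you surgically target only the members of the ``spoiler'' coalitions---both maneuvers kill exactly the competing fully-agreed coalitions and leave $\coal'$ as the unique winner.
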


The result in Theorem~\ref{thm:winningCoal} stated for RSNEs transfers to SNEs after a slight modification of the payoffs. It is sufficient to assume that an agent playing an empty team receives slightly higher payoff than if she plays a non-empty losing team. In other words, this modification associates some small costs with the preparation of a bid by the agents. Hereinafter, whenever we mention a strictly winning team we assume that the agents incur such costs. To state the result for SNEs we also need to use the definition of a team $\coal$ being \emph{strictly implicitly endangered} by $\coal'$. This definition differs from being implicitly endangered only by not requiring the agents from $N_{\coal} \cap N_{\coal'}$ to have at least as good payoffs, but strictly better payoffs in $\coal'$ than in $\coal$.

\begin{theorem}\label{thm:winningCoal}
If there are small but positive costs of preparing the offer by the agents then
the team $\coal$ is strongly winning if and only if $\coal$ is not explicitly nor strictly implicitly endangered by any team.
\end{theorem}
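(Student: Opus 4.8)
The plan is to follow the proof of Theorem~\ref{thm:rigorWinningCoal} almost verbatim, adjusting two things in tandem: the equilibrium notion weakens from RSNE (deviators want payoffs \emph{at least as good}) to SNE (deviators want payoffs \emph{strictly better}), and correspondingly the notion of implicit endangerment is strengthened to \emph{strict} implicit endangerment. The one genuinely new ingredient is the assumed small positive cost of preparing an offer: it makes playing the empty coalition yield a strictly higher payoff than playing any non-empty losing coalition, and this is exactly what turns ``clearing the field'' (forcing spoiler coalitions to dissolve) into a strict improvement for the agents who perform it. As in Theorem~\ref{thm:rigorWinningCoal} I prove the two implications separately; throughout, $\coal$ is the feasible coalition in question.

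For the ``$\Longrightarrow$'' direction (not endangered implies strongly winning) I would exhibit an explicit SNE: let all agents of $N_{\coal}$ play $\langle N_{\coal}, \salary_{\coal}, \bid_\coal\rangle$ and every other agent play the empty coalition. Since the only coalition on which all members agree is $\coal$ itself, and $\coal$ is feasible, its members receive $\salary_{\coal} > 0$. Suppose some set $S$ of agents could deviate with all of them strictly improving. Each agent of $S$ previously had payoff $0$ (if outside $N_{\coal}$) or $\salary_{\coal}(i)$ (if inside $N_{\coal}$), so to improve strictly each must end up inside the unique cheapest feasible coalition $\coal'$ on which everyone agrees after the deviation; since non-deviators still play only $\coal$ or the empty coalition, this forces $N_{\coal'} = S$ with $\coal'$ feasible. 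If $N_{\coal'} \cap N_{\coal} = \emptyset$, then $\coal$ is still agreed upon (none of its members moved) and $\coal'$ beats it, so $\coal'$ is cheaper than $\coal$ and disjoint from it: $\coal$ is explicitly endangered, a contradiction. If $N_{\coal'} \cap N_{\coal} \neq \emptyset$, the agents of that intersection improved, i.e.\ get strictly better salary in $\coal'$ than in $\coal$, and $\langle N_{\coal'},\salary_{\coal'}\rangle \neq \langle N_{\coal},\salary_{\coal}\rangle$ (otherwise there is no deviation): $\coal$ is strictly implicitly endangered, again a contradiction. Hence the constructed profile is an SNE and $\coal$ is strongly winning.

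For the ``$\Longleftarrow$'' direction I would argue the contrapositive. Take an SNE witnessing that $\coal$ is strongly winning, so $N_{\coal}$ gets $\salary_{\coal}$ and $\coal$ is the client's selected coalition; assume $\coal$ is explicitly or strictly implicitly endangered by some feasible $\coal'$. Form the deviation in which all of $N_{\coal'}$ play $\coal'$ and every agent of $N \setminus (N_{\coal} \cup N_{\coal'})$ who is currently playing a non-empty losing coalition switches to the empty coalition. After this move, the only feasible coalition on which all members agree is $\coal'$: any other agreed coalition of the original profile was disjoint from $N_{\coal}$, hence its members lie outside $N_{\coal}$ and were either moved to the empty coalition or belong to $N_{\coal'}$; and $\coal$ itself is either still agreed but strictly more expensive than $\coal'$ (explicit case), or broken because a member moved into $\coal'$ (strict implicit case). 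Therefore every agent of $N_{\coal'}$ receives $\salary_{\coal'}$: strictly positive for those in $N_{\coal'} \setminus N_{\coal}$, and (by strict implicit endangerment) strictly larger than $\salary_{\coal}$ for those in $N_{\coal'} \cap N_{\coal}$; each outsider who switched to the empty coalition gains the saved bid-preparation cost. All deviators strictly improve, contradicting that the profile is an SNE.

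The main obstacle is the field-clearing step just described: one must move \emph{exactly} the agents playing non-empty losing coalitions outside $N_{\coal} \cup N_{\coal'}$ --- moving an agent who is already playing the empty coalition would not give her a strict gain, so she cannot be part of the deviating set, which is why the bid-preparation cost is indispensable --- and then verify that no spoiler coalition can re-form, using the structural fact that in any SNE every agreed coalition other than the winning one is disjoint from the winners. A minor but necessary check is that $\coal'$ can actually be selected by the client, i.e.\ that it is feasible and within budget; this is immediate because feasibility of a coalition already includes the constraint $\cost_{\coal'} \le v$.
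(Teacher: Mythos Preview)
Your proof is correct and follows exactly the route the paper intends (the paper's own proof is just ``analogous to Theorem~\ref{thm:rigorWinningCoal}''), with the right substitutions of SNE for RSNE and strict implicit for implicit endangerment. One simplification worth noting: in any SNE with bid-preparation costs, every agent outside $N_{\coal}$ must already be playing the empty coalition---otherwise a unilateral switch to empty would save the bid cost and strictly improve her payoff, contradicting Nash (hence Strong Nash) equilibrium---so your field-clearing step is vacuous and the deviating set in the ``$\Longleftarrow$'' direction is simply $N_{\coal'}$; the true role of the bid cost is thus to guarantee the field is \emph{already} clear, rather than to make the clearing move itself a strict improvement.
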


Theorems~\ref{thm:rigorWinningCoal}~and~\ref{thm:winningCoal} lead to a simple brute-force algorithm for checking whether the team $\coal$ can be a part of some RSNE. It is sufficient to check whether for each set of agents $N' \subseteq N_{\coal}$ there exists a payoff function $\salary_\coal$ and a cost $\cost$ such that $\coal$ is explicitly or implicitly endangered by $\langle N', \salary, \cost \rangle$ (such a condition can be checked by enumerating the payoff functions which assign to each agent his or her minimal salary, the salary that he or she obtains in $\coal$, or the next higher salary).  
Below, we characterize RSNEs in the project salary model even more precisely.

\begin{lemma}\label{thm:rigorStrongWinningCharacter}
In the project salary model, the set of agents participating in a rigorously strongly winning team is the same as the set of agents participating in the cheapest feasible team.
\end{lemma}
\begin{lemma}\label{prop:maxBid}
In the project salary model
the bid of a strongly winning team is equal to the maximal allowed price $v$.
\end{lemma}

Lemma~\ref{thm:rigorStrongWinningCharacter}~and~Lemma~\ref{prop:maxBid} show that the problem of finding a strongly winning team reduces to the problem of finding a feasible team. The problem, thus, becomes an optimization problem; the strategic behavior of agents has no impact (see Propositions~\ref{prop:checkingRigStrWinn}~and~\ref{prop:findingRigStrWinn} in Appendix~\ref{app:proofs}). An RSNE (and even an SNE) may not exist in some instances.

\begin{proposition}\label{prop:noWinningCoal}
Both in the project salary and in the hourly salary model, there may not exist a strongly winning team even though there exists a feasible team.
\end{proposition}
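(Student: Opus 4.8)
The plan is to exhibit a single small instance with a ``cyclic'' feasibility structure in which every feasible coalition is endangered by another feasible coalition, so that by Theorem~\ref{thm:winningCoal} no coalition can be strongly winning, while a feasible coalition plainly exists. Concretely, I would take $N=\{a,b,c\}$ with $\iminsalary{a}=\iminsalary{b}=\iminsalary{c}=1$, set the budget to $v=3$, and let the oracle declare a coalition feasible exactly when its set of participating agents is one of the three pairs $\{a,b\}$, $\{a,c\}$, $\{b,c\}$; all other agent-sets (the singletons, the empty set, and crucially the grand coalition $\{a,b,c\}$) are declared infeasible. In the hourly salary model I additionally stipulate that the unique feasible schedule of each pair assigns one time unit to each of its two members, so that the cost of such a coalition is just the sum of the two salaries, exactly as in the project salary model. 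A feasible coalition clearly exists --- e.g.\ $\{a,b\}$ with both salaries equal to $1$ --- so the hypothesis of the proposition holds.

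Next I would rule out strongly winning coalitions. By Definition~\ref{def:strongWinning} a strongly winning coalition gives its members positive payoffs, hence is feasible with a positive salary function, so its agent-set is one of the three pairs; by the symmetry of the instance assume it is $\coal=\langle\{a,b\},\salary_\coal,\cost_\coal\rangle$ with $\salary_\coal(a),\salary_\coal(b)\geq 1$ and $\salary_\coal(a)+\salary_\coal(b)\leq v=3$. The budget constraint forces $\min(\salary_\coal(a),\salary_\coal(b))=1$, since otherwise the two salaries would sum to at least $4$; say $\salary_\coal(a)=1$. Then the coalition $\coal'$ on $\{a,c\}$ with $\salary_{\coal'}(a)=2$ and $\salary_{\coal'}(c)=1$ has cost $3\leq v$ and is feasible, and it strictly implicitly endangers $\coal$: the two coalitions overlap in $\{a\}\neq\emptyset$, agent $a$'s salary strictly increases from $1$ to $2$, and $N_\coal\neq N_{\coal'}$. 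By Theorem~\ref{thm:winningCoal}, $\coal$ is not strongly winning, a contradiction; since the same argument applies after permuting the roles of $a$, $b$, $c$, no feasible coalition is strongly winning.

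It remains to record why explicit endangerment is irrelevant here and why the single instance settles both models. Explicit endangerment of a coalition requires a \emph{feasible} coalition disjoint from it; but within $N$ the complement of any feasible agent-set (a pair) is a singleton or the empty set, neither of which is feasible, so no coalition is ever explicitly endangered --- this is exactly why the whole burden is carried by strict implicit endangerment, which the three-cycle of feasible pairs supplies. Because every schedule used above assigns one time unit to each member, all the costs computed above are unchanged in the hourly salary model, so the identical instance and computation apply there verbatim. The only delicate point is making sure the oracle is legitimate and, in particular, non-monotone: $\{a,b,c\}$ must be infeasible, for otherwise the coalition on $\{a,b,c\}$ with all salaries $1$ and bid $v=3$ would be neither explicitly nor strictly implicitly endangered (raising even one more salary exceeds the budget) and would therefore be strongly winning. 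Beyond this bookkeeping I do not anticipate any real obstacle.
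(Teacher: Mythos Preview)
Your proof is correct and uses essentially the same cyclic three-agent construction as the paper's own proof. The only cosmetic difference is in the parameters: the paper takes $v=5$ and $\minsalary=2$, which makes the grand coalition over-budget automatically ($2+2+2>5$) and thus sidesteps the non-monotone-oracle stipulation you flag as delicate, but the argument is otherwise identical.
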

\begin{proof}
Consider a project with budget $v=5$; and three identical agents $a$, $b$, $c$ with minimal salaries $\minsalary=2$ (in the hourly salary model, assume that each agent spends
exactly 1 time unit on the project); a team of any two agents is feasible (able to complete the project on time and within the budget).

For the sake of contradiction assume there exists a team $\coal$ that gets positive payoffs. Without loss of generality we assume that $N_{\coal} = \{ a, b \}$. At least one of the agents, let us say $a$, has to get salary at most equal to $2.5$. However, the agents $a$ and $c$, with the salaries equal to $3$ and $2$ respectively, can form a feasible team in which both $a$ and $c$ get better payoffs.
\end{proof}

\subsection{Weakly Winning Teams}

Proposition~\ref{prop:noWinningCoal} suggests that a notion of a strongly winning team is too restrictive. The team $\{a, c\}$ can profit by deviating, e.g., by playing $\salary(a) = 3$ and $\salary(c) = 2$. But $a$ should not be willing to deviate, as $\{a, c\}$ with payoffs $\salary(a) = 3$ and $\salary(c) = 2$ too is not stable (for instance, the team $\{b, c\}$ can play $\salary(b) = 2$ and $\salary(c) = 3$, and successfully deviate from $\{a, c\}$). In the above example no team strongly wins, even though intuitively there are teams that would agree to work. Thus, we propose a weaker notion of a winning team.

\begin{definition}
A feasible team $\coal$ is \emph{weakly winning} if it is not explicitly endangered by any team and for each feasible team $\coal'$ such that $\coal$ is implicitly endangered by $\coal'$, there exists a feasible team $\coal''$ such that $\coal'$ is explicitly or implicitly endangered by $\coal''$.
\end{definition}

\begin{proposition}\label{prop:prop5stub}
There exists a weakly winning team if and only if there exists a feasible team.
\end{proposition}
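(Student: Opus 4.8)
The plan is to prove both directions. The forward direction ($\Rightarrow$) is trivial: a weakly winning coalition is by definition feasible, so its existence immediately gives a feasible coalition. All the work is in the converse ($\Leftarrow$): assuming some feasible coalition exists, I must produce one that is weakly winning.

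First I would argue that a natural candidate is the \emph{cheapest} feasible coalition $\coal^{*}(N)$, which exists by assumption (its set of agents is $N_{\coal^{*}(N)}$, its salaries being, say, the minimal ones, or more generally the cheapest feasible salary assignment). I would check the two defining conditions. For the first condition, $\coal^{*}(N)$ cannot be explicitly endangered: explicit endangerment by $\coal'$ requires $\coal'$ feasible, disjoint from $\coal^{*}(N)$, and strictly cheaper (or equal-cost but $\prec$-preferred); but $\coal^{*}(N)$ is the cheapest feasible coalition overall (with ties broken by $\prec$), so no such $\coal'$ exists. The second condition is the substantive one: for every feasible $\coal'$ that implicitly endangers $\coal^{*}(N)$, I must exhibit a feasible $\coal''$ that explicitly or implicitly endangers $\coal'$. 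Here I would use the fact that $\coal'$ shares at least one agent with $\coal^{*}(N)$ and each shared agent gets at least as good a salary in $\coal'$; in particular $\coal'$ is a strictly more expensive (or differently structured) feasible coalition. The idea is to ``repair'' $\coal'$ back toward the cheap coalition: take $\coal''$ to have agent set $N_{\coal^{*}(N)}$, keeping the (high) salaries of the agents in $N_{\coal^{*}(N)}\cap N_{\coal'}$ as in $\coal'$ and the cheap salaries of the remaining agents of $N_{\coal^{*}(N)}$ as in $\coal^{*}(N)$ — exactly the construction used in the proof of Theorem~\ref{thm:rigorStrongWinningCharacter}. Since $\cost_{\coal'}\le v$ and replacing $N_{\coal'}\setminus N_{\coal^{*}(N)}$ by the cheap part of $N_{\coal^{*}(N)}$ does not raise the cost, $\coal''$ is feasible; and every agent in $N_{\coal^{*}(N)}\cap N_{\coal'}$ (which is nonempty) gets in $\coal''$ exactly the same salary as in $\coal'$, so $\coal''$ implicitly endangers $\coal'$, as required.

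I would handle separately the degenerate sub-case where $N_{\coal'}=N_{\coal^{*}(N)}$ but $\salary_{\coal'}\neq\salary_{\coal^{*}(N)}$ (implicit endangerment with equal agent sets): then $\coal'$ must in fact already equal $v$-priced or be reachable, and the ``repaired'' coalition $\coal''$ has the same agent set as $\coal'$ but a salary vector differing from $\coal'$ in at least one coordinate (since not all agents can simultaneously raise salaries within budget if $\coal'$ already differs from the cheapest), so $\coal''$ implicitly endangers $\coal'$ by clause (iii) of Definition~\ref{def:endangered}. For the hourly salary model I would note the same construction goes through, using that swapping out the non-shared part for the cheap configuration cannot increase cost because $\coal^{*}(N)$ is cheapest.

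The main obstacle I anticipate is making the ``repair'' construction robust: I must ensure that, for \emph{every} possible implicitly endangering $\coal'$, the coalition $\coal''$ I build is genuinely feasible (budget and salary consistency) and genuinely endangers $\coal'$ in the formal sense of Definition~\ref{def:endangered} — in particular that clause (iii) (either different agent set or different salary vector) always holds, which needs a short argument that $\coal''$ cannot coincide with $\coal'$ as a full triple. A secondary subtlety is that $\coal''$ need not itself be weakly winning — but the definition of weakly winning only asks that each implicitly endangering $\coal'$ be \emph{itself} endangered, so a single-level repair suffices and no infinite-descent argument is needed.
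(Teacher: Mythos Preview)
Your approach has a genuine gap: the cheapest feasible coalition $\coal^{*}(N)$ (with minimal salaries) is \emph{not} always weakly winning, so fixing it as the candidate and trying to verify the definition directly cannot succeed. The failure occurs exactly in the ``degenerate sub-case'' you flagged, and your handling there is incorrect.

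Concretely, take a single agent $a$ with $\iminsalary{a}=1$, budget $v=5$, and $\{a\}$ feasible. Then $\coal^{*}(N)=\langle\{a\},\salary_a=1\rangle$. The coalition $\coal'=\langle\{a\},\salary_a=5\rangle$ implicitly endangers $\coal^{*}(N)$. But nothing endangers $\coal'$: there is no disjoint feasible coalition (so no explicit endangerment), and any $\coal''$ that implicitly endangers $\coal'$ must contain $a$ with salary $\geq 5$, be feasible (cost $\leq 5$), and differ from $\coal'$ --- impossible. Hence $\coal^{*}(N)$ is not weakly winning. Your ``repair'' construction yields $\coal''=\coal'$ here, and the parenthetical justification (``not all agents can simultaneously raise salaries within budget if $\coal'$ already differs from the cheapest'') does not rescue it: when $\coal'$ already sits at the budget ceiling there is simply no room to produce a distinct implicitly endangering $\coal''$.

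The paper avoids this by \emph{not} committing to a single candidate. It starts from any feasible $\coal$ that is not explicitly endangered and then argues disjunctively: either every coalition $\coal'$ that implicitly endangers $\coal$ is itself endangered --- in which case $\coal$ is weakly winning by definition --- or some such $\coal'$ is endangered by nothing at all, in which case that $\coal'$ is (rigorously) strongly winning and hence weakly winning. In the one-agent example the second branch fires and hands you $\coal'$ as the witness. The moral is that the weakly winning coalition need not be the cheapest one; the proof of existence must allow the witness to shift.
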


\begin{proposition}\label{prop:prop6stub}
In the project salary model, if the salaries of the agents can be rational numbers, the problem of finding a weakly winning team and the problem of checking whether a team $\coal'$ is weakly winning can be solved in time $O(n^5\log(nv)\textit{FCFT})$.
\end{proposition}

Finding an efficient algorithm for the same problem with discrete salaries is still an open question.

\section{Mechanism Design}\label{sec:mechDesign}

In this section we analyze two mechanisms that a client can use to find a winning team: the first one sets the project's budget $v$; the second one uses a first-price auction.

First, we show that if the client is allowed to change the budget $v$ there exists a simple mechanism (based on a binary search) ensuring the existence of a strongly winning team.

\begin{theorem}\label{thm:newValue}
If there exists a feasible team, then there exists a budget $v^*$ for which there exists a strongly winning team.
The problem of finding such a $v^*$ can be solved in time $O(\log v \cdot \textit{FFT})$.
\end{theorem}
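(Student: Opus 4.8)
The plan is to set the new budget to the cost of the cheapest feasible coalition and show that this coalition then becomes strongly winning. Concretely, let $\coal = \coal^{*}(N)$ be the cheapest feasible coalition (which exists by hypothesis), and put $v^{*} := \cost_{\coal}$. Since no feasible coalition can be cheaper than $\coal$, at budget $v^{*}$ \emph{every} feasible coalition has cost exactly $v^{*}$, and $\coal$ is still the $\prec$-minimal one among them. I would then prove that $\coal$ is strongly winning at budget $v^{*}$ by checking, via Theorem~\ref{thm:winningCoal}, that it is neither explicitly nor strictly implicitly endangered by any coalition.

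For explicit endangerment: any feasible coalition $\coal'$ disjoint from $\coal$ has $\cost_{\coal'} \ge v^{*} = \cost_{\coal}$ (nothing is cheaper than $\coal$), so $\coal'$ can be ``cheaper'' than $\coal$ only if $\cost_{\coal'} = v^{*}$ and $N_{\coal'} \prec N_{\coal}$ --- impossible, since $\coal$ is the tie-break-minimal feasible coalition of cost $v^{*}$. For strict implicit endangerment the key observation is the following: if $\coal'$ is \emph{any} feasible coalition (at budget $v^{*}$), then replacing each salary $\salary_{\coal'}(i)$ by the minimal salary $\iminsalary{i}$ while keeping $\coal'$'s schedule again yields a feasible coalition on the agent set $N_{\coal'}$, whose cost is at most $\cost_{\coal'} \le v^{*}$; but by minimality of $v^{*}$ this cost cannot drop below $v^{*}$, so it equals $v^{*}$, which forces $\cost_{\coal'} = v^{*}$ and $\salary_{\coal'}(i) = \iminsalary{i}$ for every $i \in N_{\coal'}$. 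Hence no agent $i \in N_{\coal} \cap N_{\coal'}$ can receive in $\coal'$ a salary strictly above $\iminsalary{i} \le \salary_{\coal}(i)$, so $\coal$ is not strictly implicitly endangered. By Theorem~\ref{thm:winningCoal}, $\coal$ is strongly winning.

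For the complexity bound, observe that the property ``there is a feasible coalition under budget $v'$'' is monotone in $v'$: a coalition feasible under $v'$ has cost $\le v'$ and is therefore feasible under every $v'' \ge v'$. Consequently $v^{*}$, the least such $v'$, is found by binary search over $v' \in \{1, \dots, v\}$ with one call to the \textsc{FFC} oracle at each step, i.e.\ in time $O(\log v \cdot \textit{ffc})$.

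I expect the strict-implicit-endangerment step to be the crux: one must exclude that a coalition sharing some agents with $\coal$ overpays one of them while still fitting within budget $v^{*}$. This is precisely where minimality of $v^{*}$ is essential --- it forces every salary in every feasible coalition down to the minimum, leaving no slack to overpay anyone. The argument should go through verbatim in the hourly salary model provided schedules assign a positive workload to every coalition member; the degenerate zero-workload case, if admitted by the model, would need a separate but routine check.
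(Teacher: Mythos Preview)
Your proof is correct and follows essentially the same route as the paper: set $v^{*}$ to the minimum budget admitting a feasible coalition, take $\coal$ to be the $\prec$-minimal feasible coalition at that budget, and rule out explicit and strict implicit endangerment using minimality of $v^{*}$ (the paper argues this directly rather than citing Theorem~\ref{thm:winningCoal}, but the content is identical). Your handling of the tie-breaking case in explicit endangerment and your explicit remark about the zero-workload edge case in the hourly model are in fact more careful than the paper's own proof, which glosses over both.
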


In the second approach we use the first-price auction in which teams participate. 
In a standard first-price auction, an item's price starts from some minimal value (the least preferred outcome for the owner of the item). Bidders place bids for the current price. The asking price is gradually increased until there are no further bids; the last bidder wins. Similarly, in our proposed auction, the auction starts from the original budget $v$ (the least preferred outcome for the client); the asking price is gradually \emph{decreased}. Teams place bids for the current asking price (as in the standard first-price auction, multiple bids for the same asking price are not allowed). The auction stops when no feasible team bids lower than the current asking price. This procedure leads to the concept of an auction-winning team.

\begin{definition}
A team $\coal$ is \emph{auction-winning} iff there is no feasible team $\coal'$ such that $b_{\coal'} < b_{\coal}$ and for each agent $i \in N_{\coal} \cap N_{\coal'}$ the agent gets better salary in $\coal'$, $\salary_{\coal'}(i) \geq \salary_{\coal}(i)$.
\end{definition}

\begin{proposition}\label{prop:findAuctionWinning}
The problem of checking whether a feasible team $\coal$ is auction-winning can be solved in time $O(\textit{FFT})$. The problem of finding an auction-winning team can be solved in time $O(v \cdot \textit{FFT})$.
\end{proposition}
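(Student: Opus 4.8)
The plan is to reduce both tasks to calls to \textsc{FFC}.

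\textbf{Checking.} Given a feasible coalition $\coal$, I would build a modified \textsc{FFC} instance $I'$ on the same project and agent set: raise the minimal salary of every agent $i \in N_\coal$ to $\iminsalary{i} := \salary_{\coal}(i)$, leave $\iminsalary{j}$ unchanged for $j \notin N_\coal$, and set the budget to $b_\coal - 1$ (salaries and costs are integers here, so ``strictly cheaper than $\coal$'' is exactly ``cost at most $b_\coal - 1$''). Then I would run the oracle on $I'$ and claim that $\coal$ is auction-winning if and only if $I'$ has no feasible coalition. The relabelling of minimal salaries costs $O(n)$, which is dominated by $\textit{ffc}$, so the whole test is $O(\textit{ffc})$.

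For correctness, if $I'$ admits a feasible coalition $\coal'$, then $\coal'$ has cost at most $b_\coal - 1 < b_\coal$, and every member of $\coal'$ lying in $N_\coal$ is paid at least its raised minimal salary $\salary_{\coal}(i)$; hence $\coal'$ witnesses that $\coal$ is not auction-winning (note $\coal'\neq\coal$ since $\coal$ has cost $b_\coal$, too large for $I'$). Conversely, if $\coal$ is not auction-winning there is a feasible $\coal'$ with $b_{\coal'} < b_\coal$ and $\salary_{\coal'}(i) \ge \salary_{\coal}(i)$ for all $i \in N_\coal \cap N_{\coal'}$; this same $\coal'$ is feasible in $I'$, because its cost is at most $b_\coal - 1$, the agents of $N_{\coal'} \setminus N_\coal$ face no new constraint, and the agents of $N_{\coal'} \cap N_\coal$ already meet the raised bound. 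The one point needing care — and the only real obstacle in the argument — is that we raise the minimal salary of \emph{all} of $N_\coal$, not merely of the overlap $N_\coal \cap N_{\coal'}$ (which is not known in advance); this is harmless precisely because an agent of $N_\coal \setminus N_{\coal'}$ does not belong to $\coal'$ and contributes nothing to its cost, so the extra constraints never exclude a legitimate witness, and conversely they force exactly the asymmetric condition from the definition onto the agents that do belong to $\coal'$.

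\textbf{Finding.} I would first observe that any cheapest feasible coalition is auction-winning: the ``bad'' condition in the definition requires a feasible $\coal'$ with $b_{\coal'} < b_\coal$, and for a cheapest coalition no such $\coal'$ exists, so the condition fails vacuously and the salary clause is irrelevant. Hence it suffices to produce a feasible coalition of minimum cost, which I would do by simulating the descending first-price auction: call \textsc{FFC} with budget $v, v-1, v-2, \dots$; if it already fails at budget $v$ there is no feasible (hence no auction-winning) coalition, and otherwise let $p$ be the largest budget at which the oracle reports no feasible coalition. Then at budget $p+1$ the oracle returns a feasible coalition $\coal^*$ whose cost, being strictly greater than $p$ yet at most $p+1$, equals $p+1 = v^*$, the minimum feasible cost; so $\coal^*$ is cheapest and therefore auction-winning. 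This uses at most $v+1$ oracle calls, giving $O(v \cdot \textit{ffc})$. (A binary search over the budget would in fact give $O(\log v \cdot \textit{ffc})$; the coarser bound stated matches the literal step-by-step auction and is all that is needed.)
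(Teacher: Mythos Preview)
Your checking argument is exactly the paper's: raise the minimal salaries of the members of $\coal$ to their salaries in $\coal$, drop the budget to $b_\coal-1$, and call \textsc{FFC} once.

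For finding, the paper simply says ``simulate the auction,'' meaning: start with any feasible coalition at budget $v$, then repeatedly use the checking call (with the raised salaries of the current leader) to produce a strictly cheaper challenger until none exists; since the cost drops by at least one each round, this terminates in at most $v$ iterations. Your route is slightly different: you observe directly that a \emph{cheapest} feasible coalition is vacuously auction-winning (there is no feasible $\coal'$ with $b_{\coal'}<b_\coal$ at all), and then locate such a coalition by the linear sweep over budgets. Both arguments are correct and hit the stated $O(v\cdot\textit{ffc})$ bound; yours has the advantage that it sidesteps the salary-update bookkeeping entirely, and your remark that binary search actually gives $O(\log v\cdot\textit{ffc})$ is a genuine improvement over the bound as stated.
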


\section{Conclusions}

We presented a new class of coalitional games that model cooperation and competition for employment in a complex project. Our games extend and relate to a number of well-known problems, such as coalition formation, coalitional auctions, auctions for sharable items, etc.
We considered two market organizations. In a centralized market, the winning team is selected by the client based on bids from individual agents; the agents are strategic about the salaries they request. In a decentralized market, the already-formed teams bid for the project, thus the agents are strategic both regarding their salaries and regarding their cooperation partners.

We proposed concepts of stability for each of our models and we showed how to reduce the problem of finding a winning team to the problem of finding a feasible one, for which we assumed we have an oracle with known complexity.

To instantiate our abstract model, in Appendix~\ref{sec:findingFeasible} we show how to solve a scheduling problem in which the project is a set of independent tasks and the agents have certain skills in processing them (represented by unrelated processing speeds).

\noindent\textbf{Acknowledgements:}  The authors thank Marcin Dziubi\'nski and Piotr Faliszewski for their helpful comments and Edith Elkind for the fruitful discussion on the related literature.

This research has been partly supported by the Polish National Science Center grants Sonata (UMO-2012/07/D/ST6/02440) and Preludium (UMO-2013/09/ N/ST6/03661), and by Europe Research Grant ERC-StG 639945. 

\bibliographystyle{plain}
\bibliography{crowdsourcing}

\begin{thebibliography}{10}

\bibitem{Bachrach20131}
Y.~Bachrach, D.~C. Parkes, and J.~S. Rosenschein.
\newblock Computing cooperative solution concepts in coalitional skill games.
\newblock {\em Artificial Intelligence}, 204(0):1--21, 2013.

\bibitem{bloch1996sequential}
F.~Bloch.
\newblock Sequential formation of coalitions in games with externalities and
  fixed payoff division.
\newblock {\em Games and Economic Behavior}, 14(1):90--123, 1996.

\bibitem{Chen:2010:FMD:1917827.1918350}
N.~Chen, E.~Elkind, N.~Gravin, and F.~Petrov.
\newblock Frugal mechanism design via spectral techniques.
\newblock In {\em Proceedings of FOCS-2010}, pages 755--764, 2010.

\bibitem{Chen:2007:CLE:1283383.1283459}
N.~Chen and A.~R. Karlin.
\newblock Cheap labor can be expensive.
\newblock In {\em Proceedings of SODA-2007}, pages 707--715, 2007.

\bibitem{doi:10.1142/S1609945103000406}
D.~J. Edwards.
\newblock Accident trends involving construction plant: An exploratory
  analysis.
\newblock {\em Journal of Construction Research}, 04(02):161--173, 2003.

\bibitem{hart1983endogenous}
S.~Hart and M.~Kurz.
\newblock Endogenous formation of coalitions.
\newblock {\em Econometrica: Journal of the Econometric Society}, pages
  1047--1064, 1983.

\bibitem{Iwasaki:2007:FMH:1781894.1781924}
A.~Iwasaki, D.~Kempe, Y.~Saito, M.~Salek, and M.~Yokoo.
\newblock False-name-proof mechanisms for hiring a team.
\newblock In {\em Proceedings of WINE-2007}, pages 245--256, 2007.

\bibitem{mcCabeThesis}
Antony McCabe.
\newblock {\em Frugality in Set-System Auctions}.
\newblock PhD thesis, University of Liverpool, 2012.

\bibitem{Nisan:1999:AMD:301250.301287}
N.~Nisan and A.~Ronen.
\newblock Algorithmic mechanism design (extended abstract).
\newblock In {\em Proceedings of STOC-1999}, pages 129--140, 1999.

\bibitem{ray1999theory}
D.~Ray and R.~Vohra.
\newblock A theory of endogenous coalition structures.
\newblock {\em Games and Economic Behavior}, 26(2):286--336, 1999.

\bibitem{walshConstuction}
K.~D. Walsh, A.~Sawhney, and H.~H. Bashford.
\newblock Cycle-time contributions of hyper-specialization and time-gating
  strategies in us residential construction.
\newblock In {\em Proceedings of 11th Annual Conference on Lean
  Construction-2003}, pages 390--397, 2003.

\bibitem{Williams1999269}
T.~M. Williams.
\newblock The need for new paradigms for complex projects.
\newblock {\em International Journal of Project Management}, 17(5):269 -- 273,
  1999.

\bibitem{Wooldridge:2006:CCC:1148349.1148351}
M.~Wooldridge and P.~E. Dunne.
\newblock On the computational complexity of coalitional resource games.
\newblock {\em Artificial Intelligence}, 170(10):835--871, July 2006.

\end{thebibliography}

\newpage
\appendix

\section{Proofs Omitted from the Main Text}\label{app:proofs}

\theoremstyle{plain}
\newtheorem*{propzerostub}{Proposition~\ref{prop:centralWinningTeamRed}}

\begin{propzerostub}
The problem \textsc{FCFT} can be solved in time $O((\log v + n)\textsc{FFT})$, where $\textsc{FFT}$ is the complexity of the problem \textsc{FFT}. Having the asking salaries of the agents, the problem of finding the winning team can be solved in time $O(\textit{FCFT})$, where $\textit{FCFT}$ is the complexity of the problem \textsc{FCFT}.
\end{propzerostub}
\begin{proof}
We start from showing that the problem \textsc{FCFT} can be solved in time $O((\log v + n)\textsc{FFT})$.
First, we solve \textsc{FFT} with binary search over $v$ to find the lowest bid $v^{*}$ for which there still exists a feasible team. 

Next, we need to find the team bidding  $v^{*}$ that is preferred by the tie-breaking rule.
We recall that $\coal \prec \coal'$ if $\coal$ precedes $\coal'$ in the lexicographic order. We consider the agents in the increasing order of their names. For each agent $i$ we decrease her salary by 1 ($\minsalary := \minsalary - 1$) and solve \textsc{FFT} for $v = v^{*}-1$. If there is one, this means that in the initial setting there exists a feasible team offering bid $v^{*}$ and having agent $i$ as a member. We store $i$ as a member of the winning team. With the modified salary of $i$ and an updated budget of $v^{*} = v^{*}-1$ we consider the next agent. Otherwise we reset the agent salary $\minsalary$ and the budget $v^{*}$ to their previous values and consider the next agent.

For the second part of the proposition, note that solving the problem of finding the winning team requires solving \textsc{FCFT} with the minimal salaries of the agents set to their asking salaries ($\iminsalary{i} = \salary_i$).
\end{proof}

\theoremstyle{plain}
\newtheorem*{proptwostub0}{Proposition~\ref{proptwo}}

\begin{proptwostub0}
In the project salary model, if the salaries are rational, then finding a Strong Nash Equilibrium can be solved in time $O(n^3\log(nv)\textit{FCFT}))$, where $\textit{FCFT}$ is the complexity of the problem \textsc{FCFT}. Checking whether a given vector of the asking salaries $\langle \salary_i \rangle, i \in N$ is a Strong Nash Equilibrium can be solved in time $O(\textit{FCFT})$, where $\textit{FCFT}$ is the complexity of the problem \textsc{FCFT}.
\end{proptwostub0}
\begin{proof}
Let us start from analyzing the complexity of finding a Strong Nash Equilibrium.
First, we solve a single instance of the \textsc{FCFT} problem to find $N^{*} = N_{\coal^{*}(N)}$. Next, as in the proof of Theorem~\ref{thm:existenceOfSNE}, we introduce the variables $\salary_i, i \in N^{*}$ and inequalities (\ref{inq:salary})-(\ref{inq:cheap2}). If we find the values $\salary_i, i \in N^{*}$ satisfying all the inequalities, then the values $\salary_i, i \in N^{*}$, together with the minimal salaries of the agents outside of $N^{*}$, will form a Strong Nash Equilibrium.

The set of inequalities given in the proof of Theorem~\ref{thm:existenceOfSNE} is a linear program; there are, however, exponentially many constraints (a constraint for each possible team). We construct a separation oracle by a polynomial reduction to 
\textsc{FCFT}. Since the ellipsoid method requires  $O(n^{3}L)$ calls to the separation oracle (where $L$ is the size of the representation of the problem; here $L = O(\log(nv))$), this allows us to solve the linear program in time $O(n^{3}\log(nv) FCFT)$. 

To check whether all the inequalities are satisfied, it is sufficient to solve \textsc{FCFT} with the following parameters. The minimal salaries of the agents from $N^{*}$ are set to $\salary_i$ ($\forall_{i \in N^{*}} \iminsalary{i} := \salary_i$). The minimal salaries of the agents outside of $N^{*}$ are left unmodified. Let $\coal$ denote the solution of such instance of the \textsc{FCFT} problem. There exists a not-satisfied inequality if and only if $N_\coal \neq N^{*}$. The not-satisfied inequality is the inequality that corresponds to the team $\coal \neq \coal^*$. This completes the proof.

Now, let us analyze the complexity of the problem of checking whether a given vector of the asking salaries $\langle \salary_i \rangle, i \in N$ is a Strong Nash Equilibrium.
First, we find a winning team $\coal$ for $\langle \salary_i \rangle$. According to Proposition~\ref{prop:centralWinningTeamRed} we can do this by solving an instance of the \textsc{FCFT} problem (with $\forall i:  \iminsalary{i} := \salary_i$) . Next, we solve another instance $I_2$ of the \textsc{FCFT} problem with the parameters set as follows. We set minimal salaries of the agents from $N_\coal$ to their asking salaries ($\forall_{i \in N_\coal} \iminsalary{i} := \salary_i$). The minimal salaries of the agents outside of $N_\coal$ are left unmodified. If the solution to $I_2$ consists of the members of $N_\coal$ only, we claim that a vector $\langle \salary_i \rangle, i \in N$ is a Strong Nash Equilibrium. Otherwise, it is not.
\end{proof}

\theoremstyle{plain}
\newtheorem*{theoremCharStub}{Theorem~\ref{thm:rigorWinningCoal}}

\begin{theoremCharStub}
The team $\coal$ is rigorously strongly winning if and only if $\coal$ is not explicitly nor implicitly endangered by any team.
\end{theoremCharStub}
\begin{proof}
$\Longleftarrow$ Assume that there exists a rigorously strongly winning team $\coal$; thus  there exists a Rigorously Strong Nash Equilibrium $\mathit{RSNE}$ in which the agents from $N_{\coal}$ get positive payoffs. This implies that the agents from $N_{\coal}$ agree on the action $\langle N_{\coal}, \salary_{\coal}, \bid_\coal \rangle$; other agents ($N \setminus N_{\coal}$) have zero payoffs. For the sake of contradiction let us assume that there exists a feasible team $\coal'$ such that $\coal$ is explicitly or implicitly endangered by $\coal'$.

If $N_{\coal} \cap N_{\coal'}$ is empty ($\coal$ is explicitly endangered by $\coal'$), then $N_{\coal'}$ must be cheaper. This however contradicts the assumption that the agents from $N_{\coal}$ get positive payoffs.

Assume thus that $N_{\coal} \cap N_{\coal'}$ is non-empty (i.e., $\coal$ is implicitly endangered by $\coal'$).
Consider the following collaborative action of agents $(N \setminus N_{\coal}) \cup N_{\coal'}$. All  the agents from $N_{\coal'}$ make action $\coal'$. Each agent $i$ from $N \setminus (N_{\coal} \cup N_{\coal'})$ makes an action $\langle \{\}, \salary_{\emptyset} \rangle$, where $\salary_{\emptyset}$ is an empty function. We show that after playing this action no  agent from $(N \setminus N_{\coal}) \cup N_{\coal'}$ will get lower payoff and that some agents will get a strictly better payoff (which will contradict the assumption that $\mathit{RSNE}$ is a Rigorously Strong Nash Equilibrium). Clearly each agent from $N \setminus (N_{\coal} \cup N_{\coal'})$ does not decrease her payoff (as previously it was equal to 0). Now, we show that the agents from $N_{\coal'}$ will get at least the same payoff as before. Since we know that $\coal$ is implicitly endangered by $\coal'$ (and thus the agents from $N_{\coal} \cap N_{\coal'}$ get in $\coal'$ at least as good payoff as in $\coal$) it is sufficient to show 
that the agents from $N_{\coal'}$ will get positive payoffs.
Indeed, there is no feasible team that includes some agents from $N \setminus (N_{\coal} \cup N_{\coal'})$ (as these agents play $\{\}$). Also, the agents from $N_{\coal} \setminus N_{\coal'}$ do not agree on the collaborative action (they still play $\coal$) and thus, cannot form a feasible team. Thus, after such change of played actions $\coal'$ is the only feasible team that the members agreed on. Finally, we can show that at least one agent will get a strictly better payoff. Either $N_{\coal} = N_{\coal'}$ (and since $\salary_\coal \neq \salary_{\coal'}$, some agent must get a different payoff) or $N_{\coal} \neq N_{\coal'}$ (and the agents from $N_{\coal'} \setminus N_{\coal}$ will get a positive payoff).

$\Longrightarrow$ Assume that $\coal$ is not explicitly nor implicitly endangered by any team. First, if the agents from $N_{\coal}$ make the collaborative action $\coal$, then they will all get positive payoffs. Indeed, the agents in $N_{\coal}$ could not get positive payoffs only if there would exist a cheaper feasible team $\coal'$ such that $N_{\coal} \cap N_{\coal'} = \emptyset$. This would, however mean that $\coal$ is explicitly endangered by $\coal'$. Next, we show that the state in which the agents from $N_{\coal}$ make the collective decision $\coal$ and the other agents play arbitrary actions is RSNE. For the sake of contradiction let us assume that there exists a subset of agents $N_{\coal'}$ which can make a collaborative action $\coal'$ after which the payoff of everyone from $N_{\coal'}$ would be at least equal to her payoff in $\coal$. This would, however mean that $\coal$ is either implicitly or explicitly endangered by $\coal'$. This completes the proof.
\end{proof}

\theoremstyle{plain}
\newtheorem*{lemma1stub}{Lemma~\ref{thm:rigorStrongWinningCharacter}}

\begin{lemma1stub}
In the project salary model, the set of agents participating in a rigorously strongly winning team is the same as the set of agents participating in the cheapest feasible team.
\end{lemma1stub}
\begin{proof}
Let $\coal$ denote the cheapest feasible team. We show that for any other team $\coal'$, such that $N_{\coal} \neq N_{\coal'}$, $\coal'$ cannot be rigorously strongly winning.
For the sake of contradiction let us assume that $\coal'$ is rigorously strongly winning. Let $N_{\cap} = N_{\coal} \cap N_{\coal'}$. Since $\coal$ is the cheapest, the sum of salaries 
of the agents from $N_{\coal} \setminus N_{\cap}$ in $\coal$ is lower or equal to the sum of salaries  of the agents from $N_{\coal'} \setminus N_{\cap}$ in $\coal'$. 
Consider a team $\coal''$ consisting of the set of agents $N_{\coal}$ and the following salary function. The salary of each agent from $N_{\coal} \setminus N_{\coal'}$ is the same as in $\coal$ and the salary of each agent from $N_{\cap}$ is the same as in $\coal'$.
Since the bid $\cost_{\coal'}$ of $\coal'$ was below $v$, the bid of $\coal''$ is also below $v$. Thus, $\coal''$ is feasible. Also, $\coal'$ is implicitly endangered by $\coal''$, which leads to contradiction and completes the proof.
\end{proof}

\theoremstyle{plain}
\newtheorem*{lemma2stub}{Lemma~\ref{prop:maxBid}}

\begin{lemma2stub}
In the project salary model
the bid of a strongly winning team is equal to the maximal allowed price $v$.
\end{lemma2stub}
\begin{proof}
Let $\coal$ be a strongly winning team. If $b_{\coal} < v$ we could increase the salaries of some participating agents. The resulting team would implicitly endanger $\coal$.
\end{proof}

\theoremstyle{plain}
\newtheorem*{prop5stub}{Proposition~\ref{prop:prop5stub}}

\begin{prop5stub}
There exists a weakly winning team if and only if there exists a feasible team.
\end{prop5stub}
\begin{proof}
Consider a feasible team $\coal$ that is not explicitly endangered (such a team exists provided there exists a feasible team). Let $\mathcal{E}$ denote a set of feasible teams implicitly endangering $\coal$. If $\mathcal{E} = \emptyset$, $\coal$ is strongly winning and, thus also, weakly winning. If there exists $\coal' \in \mathcal{E}$ such that $\coal'$ is not (implicitly or explicitly) endangered by any feasible team, then $\coal'$ is strongly winning (and, thus also, weakly winning). Otherwise, $\coal$ is weakly winning.

If there is no feasible team then there is no weakly winning team.
\end{proof}

\theoremstyle{plain}
\newtheorem*{prop6stub}{Proposition~\ref{prop:prop6stub}}

\begin{prop6stub}
In the project salary model, if the salaries of the agents can be rational numbers, the problem of finding a weakly winning team and the problem of checking whether a team $\coal'$ is weakly winning can be solved in time $O(n^5\log(nv)\textit{FCFT})$.
\end{prop6stub}
\begin{proof}
Consider the problem of finding a weakly winning team.
First, we look for a rigorously strongly winning team. If there is one, it is also weakly winning, and so the procedure is complete. If there is no rigorously strongly winning team it is sufficient to find a team that is not explicitly endangered by any other team. We can do this by solving a single instance of the \textsc{FCFT} problem.

Next, consider the problem of checking whether a team $\coal'$ is weakly winning.
We first check whether the team is explicitly endangered by any other team. We can do this by solving a single instance of the \textsc{FCFT} problem for the set of agents $N \setminus N_{\coal'}$.

Now, we look for a rigorously strongly winning team that endangers $\coal'$. We do this in the same way as in the proof of Proposition~\ref{prop:findingRigStrWinn}. The only difference is that we additionally introduce the following inequalities. We assume the same notation as in the proof of Proposition~\ref{prop:findingRigStrWinn}. For each $i \in N_\coal \cap N_{\coal'}$ we require:
$\salary_{\coal}(i) \geq \salary_{\coal'}(i)$.
\end{proof}

\theoremstyle{plain}
\newtheorem*{thm8stub}{Theorem~\ref{thm:newValue}}

\begin{thm8stub}
If there exists a feasible team, then there exists a budget $v^*$ for which there exists a strongly winning team.
The problem of finding such $v^*$ can be solved in time $O(\log v \cdot \textit{FFT})$, where $\textit{FFT}$ is the complexity of the problem \textsc{FFT}.
\end{thm8stub}
\begin{proof}
Let $v^*$ be the smallest value such that there exists a feasible team. We show that for $v^*$ there exists a strongly winning team. Let $\coal^*$ be the most preferred (according to the tie-breaking rule $\prec$) feasible team for $v^*$. For the sake of contradiction let us assume that there exists a team $\coal'$ such that $\coal^*$ is strictly implicitly or explicitly endangered by $\coal'$. Of course $b_{\coal'} \leq v^*$ (otherwise $\coal'$ would not be feasible). If $\coal^*$ is explicitly endangered by $\coal'$ ($N_{\coal^*} \cap N_{\coal'} = \emptyset$), it means $\coal'$ is cheaper than $\coal^*$; and we get a contradiction with the definition of $v^*$. Otherwise ($\coal^*$ is strictly implicitly endangered by $\coal'$), let $i \in N_{\coal^*} \cap N_{\coal'}$. Now, $i$ must get strictly better salary in $\coal'$ than in $\coal^*$. Thus if we change the salary of $i$ in the team $\coal'$ to $\salary_{\coal'}(i) =  \salary_{\coal^*}(i)$ we get a contradiction---a cheaper 
feasible team.

To find such a $v^*$, one has to run a binary search over $v$.
\end{proof}

\theoremstyle{plain}
\newtheorem*{prop9stub}{Proposition~\ref{prop:findAuctionWinning}}

\begin{prop9stub}
The problem of checking whether a feasible team $\coal$ is auction-winning can be solved in time $O(\textit{FFT})$. The problem of finding an auction-winning team can be solved in time $O(v \cdot \textit{FFT})$; $\textit{FFT}$ is the complexity of the problem \textsc{FFT}.
\end{prop9stub}
\begin{proof}
To check whether a team $\coal$ is auction-winning one has to solve the problem of existence of the feasible team for the asking price: $v = b_{\coal} - 1$ (representing the next asking price in the first-price auction); and for each $i \in N_{\coal}$ set $\iminsalary{i} = \salary_{\coal}(i)$ (these agents must get at least the same payoffs as in $\coal$). If no such team exists, $\coal$ is auction-winning.

To find an auction-winning team one can simply simulate the auction. 
\end{proof}

\begin{proposition}\label{prop:sneNotExistsInHourly}
In the hourly salary model there may not exist a Strong Nash Equilibrium even though there exists a feasible team.
\end{proposition}
\begin{proof}
Let us consider the following instance. The budget is $v = 49$. There are 3 agents: $a$, $b$, and $c$; their minimal hourly salaries are $\iminsalary{a} = \iminsalary{b} = \iminsalary{c} = 1$.
All two-agent teams can complete the project: if $a$ and $b$ cooperate they can complete the project spending on it $t_a = 10$ and $t_b = 10$ time units, respectively; if $a$ and $c$ cooperate they must spend $t_a = 22$ and $t_c = 2$ time units; if $b$ and $c$ cooperate they must spend $t_b = 2$ and $t_c = 38$ time units.

For the sake of contradiction let us assume that there exists a Strong Nash Equilibrium. First, consider the case when the agents $a$ and $b$ get positive payoffs in SNE. By the budget constraint, $\salary_b \leq 3$. If $\salary_b = 3$, then  $\salary_a = 1$. The total cost of $\{a, b\}$ is 40. However, $c$, by playing $\salary_c = 1$ can form a cheaper team $\{a, c\}$ with the total cost 24. If $\salary_b \leq 2$ and $\salary_a = 1$, then $a$ has an incentive to play higher. If $\salary_b \leq 2$ and $\salary_a \geq 2$, then $b$ and $c$ are better off by playing a collaborative action with $\salary_b=3$ and $\salary_c=1$---after such an action a team $\{b, c\}$ is cheaper ($\cost_{b,c}=44$) than $\{a,b\}$ ($\cost_{a,b} \geq 50$) and $\{a,c\}$ ($\cost_{a,c} \geq 46$). Thus, $a$ and $b$ cannot both have positive payoffs in SNE.

Second, assume that the agents $a$ and $c$ get positive payoffs in SNE. The total cost of $\{a, c\}$ is $22\salary_a + 2\salary_c$. In such case, if $b$ plays $\salary_a$ then the new team $\{a, b\}$ with total cost $10\salary_a + 10\salary_a$ forms a new cheapest team. 

Finally consider the case when $b$ and $c$ get positive payoffs in SNE. This means that $\salary_c = 1$. But $a$, by playing $1$ can form a team $\{a, c\}$ with the total cost 24. This completes the proof.
\end{proof}

\begin{proposition}\label{prop:checkingRigStrWinn}
Checking whether a team is rigorously strongly winning can be solved in time $O(n^2 \cdot \textit{FCFT})$, where $\textit{FCFT}$ is the complexity of the problem \textsc{FCFT}.
\end{proposition}
\begin{proof}
Let us assume that we want to check whether the team $\coal$ is rigorously strongly winning. First, we check whether we can increase the salary of any agent so that the team would still be feasible. If we can, $\coal$ is not rigorously strongly winning.
Otherwise, we solve \textsc{FCFT} for the set of agents $N \setminus N_{\coal}$. If there exists a non-empty solution $\coal'$ with the cost $\cost_{\coal'} < \cost_\coal$ or such that $\cost_{\coal'} = \cost_\coal$ and $\coal' \prec \coal$, this means that $\coal$ is explicitly endangered by $\coal'$, and thus is not rigorously strongly winning. Otherwise, $\coal$ is not explicitly endangered by any team. 

Next, we check whether $\coal$ is implicitly endangered by some team $\coal'$. We change the names of the agents so that the agents from $N_{\coal}$ were the first $\|N_{\coal}\|$ agents in the lexicographic order. Now, for each agent $i$ from $N_{\coal}$ we do the following procedure. We solve \textsc{FCFT} for the set of agents $N \setminus \{i\}$, for the minimal salaries of the agents from $N_{\coal}$ changed to their salaries in $\coal$, and for the budget $v$ set to $\cost_\coal$.
If there exists a feasible $\coal'$ to \textsc{FCFT} such that the set $N_{\coal'}$ overlaps with $N_{\coal}$ (overlapping can be tested in time $O(n)$), then $\coal$ is implicitly endangered by $\coal'$. 
We already know that there is no non-overlapping team with the cost lower than $\cost_\coal$.
Thus, if for no agent $i$ from $N_{\coal}$ we find such implicitly endangering team, this means that there is no feasible team $\coal'$ such that $N_{\coal} \cap N_{\coal'} \neq \emptyset$. Thus, in such case we conclude that $\coal$ is rigorously strongly winning.
\end{proof}

\begin{proposition}\label{prop:findingRigStrWinn}
In the project salary model, if the salaries of the agents can be rational numbers, finding a rigorously strongly winning team can be solved in time $O(n^5\log(nv)\textit{FCFT})$, where $\textit{FCFT}$ is the complexity of the problem \textsc{FCFT}.
\end{proposition}
\begin{proof}
First we solve \textsc{FCFT} to find the cheapest team $\coal$. We know that the set of the agents participating in a rigorously strongly winning team is $N_\coal$ (Lemma~\ref{thm:rigorStrongWinningCharacter}) and the total cost of such a team is $v$ (Lemma~\ref{prop:maxBid}). We only need to find the salary function of such a team. For every agent $i$ from $N_\coal$, we introduce a variable $\salary_{\coal}(i)$. We will show the linear program for the variables $\salary_{\coal}(i)$, to which the solution is a rigorously strongly winning team. At the same time we will show how to implement the separation oracle for the linear program. 

First equality states that the salaries of the agents satisfy the feasibility constraint:
\begin{align}
\sum_{i \in N_\coal}\salary_{\coal}(i) = v
\end{align}
Next two inequalities model explicit endangerment. For each team $\coal'$, such that $N_\coal \cap N_{\coal'} = \emptyset$ and $\coal' \prec \coal$:
\begin{align}
\sum_{i \in N_\coal}\salary_{\coal}(i) < \sum_{i \in N_{\coal'}}\iminsalary{i} \textrm{.}
\end{align}
For each team $\coal'$, such that $N_\coal \cap N_{\coal'} = \emptyset$ and $\coal \prec \coal'$:
\begin{align}
\sum_{i \in N_\coal}\salary_{\coal}(i) \leq \sum_{i \in N_{\coal'}}\iminsalary{i} \textrm{.}
\end{align}
Note that we can check the above two inequalities by solving \textsc{FCFT} problem for the set of agents $N \setminus  N_\coal$. If the resulting team $\coal'$ is cheaper than $\coal$, this means that the inequality constraint for $\coal'$ was violated. Otherwise, all the above inequalities are satisfied.

Last, for each team $\coal'$, such that $N_\coal \cap N_{\coal'} \neq \emptyset$ and $N_\coal \neq N_{\coal'}$ we introduce the inequality modeling implicit endangerment:
\begin{align}
\sum_{i \in N_{\coal} \setminus N_{\coal'}}\salary_{\coal}(i) + \sum_{i \in N_{\coal'} \setminus N_{\coal}}\iminsalary{i} > v \textrm{.}
\end{align}
We can check this inequality in the same way as we checked whether the team was implicitly endangered in the proof of Proposition~\ref{prop:checkingRigStrWinn}: by swapping the names of the agents, for each $i \in N_\coal$ solving \textsc{FCFT} for the set of agents $N \setminus \{i\}$, and checking the overlapping of the appropriate sets. The whole procedure requires the time $O(n^2 \cdot \textit{FCFT})$.

As the result, we showed the reduction of the problem of finding a rigorously strongly winning team to the linear program with $n$ variables and a separation oracle running in time $O(n^2 \cdot \textit{FCFT})$.
\end{proof}

\section{Other Solution Concepts}\label{sec:otherSolConcepts}

In this section we give a brief overview of other solution concepts that can be applied to describe winning teams in our games.
Most of these solution concepts have their drawbacks and they do not allow to determine winning teams. On the other hand, we point out two ideas that, we believe, are interesting for further study.
The first idea is to apply the concept of the Coalitional Farsighted Conservative Stable Set to our setting. 
The second is to apply the concepts inspired by the graph interpretations. These two solution concepts are, however, more involved, and, so, we believe that our definition of a weakly winning coalition is the natural simplification, and the first step to understand the complexity of the agents' interactions.

In the following subsections we present the discussion on the application of different solution concepts to our model.

\subsection{Cooperative Game Theory Approach}\label{sec:cooperativeGames}

It may seem that our solution concepts are closely related to solution concepts from the cooperative game theory. For instance, the definition of Rigorously Strong Nash Equilibrium is close in spirit to the concept of the core from the cooperative games. However, there are some substantial differences. In cooperative game theory it is commonly assumed that the value of a coalition (in the cooperative game theory teams correspond to coalitions) depends only on the members of this coalition. The following example shows that this is not the case in our problem.

\begin{example}
Consider 2 agents $a$ and $b$ with the minimal salaries $\iminsalary{a} = 1$ and $\iminsalary{b} = 2$. The maximal budget of the issuer is $v = 2$. Consider two team formed by single agents $\coal_1 = \{a\}$, and $\coal_2 = \{b\}$. Let us assume that $\coal_2$ is feasible. The value of $\coal_2$ depends on whether the agent $\coal_1$ is feasible or not.
\end{example}

The above example encourages one to consider our problem as a cooperative game with externalities. However, in such games the values of the coalitions depend only on the partition of the agents into coalitions. In our case, however, the whole coalitions are strategic, and their values depend on the actions (the bids) of the other coalitions. We provide a detailed discussion regarding applicability of selected concepts from cooperative game theory in the two following subsections.

\subsection{The Core}

Although the notion of \emph{the core} is initially known from the cooperative game theory, there is a natural generalization to strategic games.
In this generalization we say that team $\coal$ with payoff function $\payoff$ is in the core if and only if there is no feasible team $\coal'$ with payoff function $\payoff'$ such that every agent in $\coal'$ gets, according to $\payoff'$, a better payoff than according to $\payoff$.

Although, in cooperative game theory we use a simplified model in which feasibility means just that the total payoff of the agents does not exceed the value of the team (i.e., the bid of the team, in our approach), we may use the more demanding notion of feasibility from our model. As a result, a team $\coal$ is in the core if and only if it is not implicitly endangered by any other team.

Intuitively, the notion of the core in our games is missing an important element. Indeed, a team $\coal$ might be in the core even though some other team $\coal'$, disjoint with $\coal$, can offer a better price and, consequently, win the auction and be awarded the project.

\subsection{The (Farsighted) Stability}

Another notion known from the cooperative game theory that is worth considering is the von Neumann-Morgenstern stable set.  The stable set is the set of all payoff vectors such that (i) no payoff vector in the stable set is dominated by another vector in the set, and (ii) all payoff vectors outside the set are dominated by at least one vector in the set.

In the light of our previous example from Proposition~\ref{prop:noWinningCoal}, it is even more appealing to consider the farsighted von Neumann-Morgenstern stable set. A farsighted coalition is more deliberative, it considers that if it makes a deviation, the second team might react as a consequence of the first team's action, next the third team might react, and so on without the limit. In the original formulation the agents are considered to be optimistic---they are willing to deviate if the deviation starts some sequence of deviations that would lead to a better outcome.

In our games the vN-M stable set, and the farsighted vN-M stable set, might be empty.

\begin{example}
Consider the example from Proposition~\ref{prop:noWinningCoal}. There is a project with the budget $v=5$; and three identical agents $a$, $b$, $c$ with minimal salaries $\minsalary > 2$. Every team formed by any two agents is feasible. For the sake of clarity of the presentation let us assume that the payoffs of the agents can be the natural numbers only.
Let us consider the team $\coal_1 = \{a, b\}$ with the payoffs $\salary^{a} = 3$, and $\salary^{a} = 2$. If the team $\coal_1$ is in the stable set, then the team $\coal_2 \{b, c\}$ with the payoffs $\salary^{b} = 3$, and $\salary^{c} = 3$, which dominates $\coal_2$, must not be in the stable set (otherwise it would contradict the internal stability requirement). Since $\coal_2$ does not belong to the stable set, and it is dominated only by the team $\coal_3 = \{a, c\}$ with the payoffs $\salary^{a} = 2$, and $\salary^{c} = 3$, we infer that $\coal_3$ must belong to stable set. However, $\coal_3$ is dominated by $\coal_1$, which leads to contradiction. By symmetry, we see that the stable set is empty.
\end{example}

The same reasoning as given in the example above applies to the farsighted vN-M stable sets. The alternative definition in which the agents are conservative corresponds to the Coalitional Farsighted Conservative Stable Set. Intuitively, in this definition the agents are willing to deviate only if every sequence starting from this deviation leads to a better outcome for them.

We believe that these two cases consider too extreme behavior of the agents. Nevertheless, we think that considering coalitional farsighted conservative stable sets in our game is a very appealing direction for the future work.

\subsection{Coalition-Proof Nash Equilibria}

Another way of weakening the notion of the (rigorously) strongly winning team is to consider Coalition-Proof Nash Equilibria.
Intuitively, in the Coalitional-Proof Nash Equilibrium we first assume that all players are in a common room, where they can freely discuss their strategies. Then the agents, one by one, leave the room. Once an agent leaves the room, she cannot change her strategy. The agents that are left in the room are allowed to discuss and (cooperatively) change their strategies.

Unfortunately, these equilibria are not guaranteed to exist.
This is what we expect since a Coalition-Proof Nash Equilibrium must be essentially a Nash Equilibrium. For the sake of completeness of the presentation, below we show an appropriate example in which there is no Coalition-Proof Nash Equilibrium.

\begin{example}
Consider the example from Proposition~\ref{prop:noWinningCoal}. There is a project with the budget $v=5$; and three identical agents $a$, $b$, $c$ with minimal salaries $\minsalary > 2$. Every team formed by any two agents is feasible.
There is no Coalition-Proof Nash Equilibrium in this example (independently whether the salaries of the agents are natural or rational numbers). Indeed, consider any vector of payoffs $\langle \salary^{a}, \salary^{b}, \salary^{c} \rangle$. If $\salary^{a} > 2$, we infer that $a$ forms a winning team with one of the agents $b$, or $c$. Without loss of generality we assume that $\{a, b\}$ is the winning team. Thus, $\salary^{b} < 3$ and $\salary^{c} = 0$. If we consider the subgame formed by the agents $b$ and $c$, we see, however, that their payoff vector $\langle \salary^{b}, \salary^{c} \rangle$ is Pareto-dominated by $\langle 3, 2 \rangle$. Now, let us consider the case when $\salary^{a} < 2$. One of the agents $b$ and $c$ needs to have payoff lower than $3$ (w.l.o.g let us assume that this is the agent $b$). But, if we consider the subgame formed by the agents $a$ and $b$, their payoff vector $\langle \salary^{a}, \salary^{b} \rangle$ is Pareto-dominated by $\langle 2, 3 \rangle$.
Finally, let us assume that $\salary^{a} = 2$. We infer that one of the agents $b$ and $c$ gets zero payoff (let us assume that this is the agent $b$). However, the payoff vector $\langle \salary^{a}, \salary^{b} \rangle$ is Pareto-dominated by $\langle 3, 2 \rangle$.
\end{example}

\subsection{Graph Interpretations}

Let us consider a directed multi-graph in which the vertices are the strategy profiles. Each pair of vertices can be connected with at most two edges, corresponding to implicit and explicit endangerment. Thus, vertices $v$ and $u$ are connected by an edge corresponding to the implicit endangerment if and only if $v$ is implicitly endangered by $u$. Analogously, $v$ and $u$ are connected by an edge corresponding to the explicit endangerment if and only if $v$ is explicitly endangered by $u$.

Clearly, in such a graph, strong Nash equilibria correspond to the sinks, the vertices with no outgoing edges. Also, the edges corresponding to explicit endangerment do not form cycles. Consequently, we can restrict our graphs to these induced by the vertices that do not have outgoing edges corresponding to the explicit endangerment. We believe that every connected component in such restricted graphs defines an interesting set of stable solutions. We plan to analyze this idea in our future work. For instance, thus defined set of stable solutions is always non-empty and its elements correspond to weakly winning teams.

\section{Finding Feasible Teams in a Scheduling Model}\label{sec:findingFeasible}

In Sections~\ref{sec:winningCoalitions}~and~\ref{sec:mechDesign} we show that many problems of finding the (weakly/strongly) winning teams or determining whether a given team is (weakly/strongly)
winning require solving the subproblem of finding the feasible team.
The general model (Section~\ref{sec:model}) assumed that given a team there is an oracle deciding whether there exists a feasible team.

By specifying an oracle, our results can be applied to two different problems known in the literature: the commodity auctions and the path auctions.

In the commodity auctions setting, the project can be seen as a set of items $I = \{i_1, i_2, \dots, i_q\}$ , where each agent owns a certain subset of the items. A team is feasible if the agents have together all the items from $I$.

In the path auctions setting~\cite{Nisan:1999:AMD:301250.301287} we are given a graph $G$ with two distinguished vertices: a source $s$ and a target $t$. The agents correspond to the vertices in the graph. A team is feasible if the participating agents form a path from $s$ to $t$.

In this section we show a possible concrete instance of this model in which a project is a set of indivisible, independent, tasks and agents are processors who process these tasks with varying speeds.

\subsection{The Scheduling Model}
A project consists of a set  $\mathcal{T} = \{t_1, t_2, \dots, t_q\}$ of $q$ independent tasks. The tasks can be processed sequentially or in parallel. The tasks are indivisible: a task must be processed on a single processor. Once started, a task cannot be interrupted.
All tasks must be completed before a given time $d$, the project's deadline.  

Agents correspond to processors (in this section we use terms ``agent'' and ``processor'' interchangeably). Each agent has certain skills which are represented as the speed of executing the tasks. Thus, for each agent $i$ we define the skill vector $s_i = \langle s_{i, 1}, s_{i, 2}, \dots s_{i, q} \rangle$ which has the following meaning: agent $i$ is able to finish task $t_j$ within $s_{i, j}$ time units (with $s_{i, j}=\infty$ when the agent is unable to finish the task).
We assume that $s_i$ is known for each agent (it can be well approximated, e.g., from past behavior of the agents certified by clients in form of reviews).
An agent can process only a single task at each time moment---if she wants to process more than one task, she must execute the tasks sequentially. 
We assume that only a single agent can work on a given task. This assumption is not as restrictive as it may appear; if the task $t_i$ is large and can be processed by multiple agents in parallel, the project client will rather replace $t_i$ by a number of smaller tasks.

For a team $\coal$ we define $\Phi_\coal: \mathcal{T} \rightarrow N_{\coal}$ to be an assignment function (assigning tasks to agents). The assignment function $\Phi_\coal$ enables us to formalize the notion of a team completing the project before the deadline and also the total cost of the team.
Specifically, a project is finished before the deadline $d$ if and only if all the agents finish their assigned tasks before $d$, $\forall i \in N_{\coal}: \sum_{\ell: \Phi(t_{\ell}) = i} s_{i, \ell} \leq d$.
In the hourly salary model, the cost of the team is equal to $\cost_{\coal} = \sum_{i \in N_{\coal}} \salary_{\coal}(i) \sum_{\ell: \Phi(t_{\ell}) = i} s_{i, \ell}$. 

In the scheduling model we define the problem of finding a feasible team as follows.

\begin{problem}[\textsc{FFTSM}: Find Feasible Teams, Scheduling Model]\label{prob:findingGen}
Let $\mathcal{T}$ be the set of $q$ tasks and $N$ be the set of processors (or equivalently, agents). For each task $t_j \in \mathcal{T}$ and each processor (agent) $i \in N$ we define $s_{i,j}$ as the processing time of $t_j$ on $i$. Let $\iminsalary{i}$ be the cost of renting processor $i$ (hiring agent $i$). The budget of the project is $v$ and the deadline is $d$.
The FFTSM problem consists of selecting a subset of the processors $N' \subseteq N$ and the assignment function $\Phi: \mathcal{T} \rightarrow N'$ such that the budget is not exceeded ($\cost_{N', \Phi} \leq B$) and the project's makespan does not exceed the deadline $d$.
\end{problem}

In the hourly salary model, the problem of finding the feasible team reduces to the problem of scheduling on unrelated processors with costs. Specifically, there exists a 2-approximation algorithm for approximating the makespan (the deadline $d$ in our model). 

\begin{problem}[\textsc{FFTHS}: Find Feasible Teams, Hourly Salary]\label{prob:finding3}
The instances of the problem are the same as in the \textsc{FFTSM} problem, except that in the \textsc{FFTHS} problem we additionally specify that the cost of the team $\cost_{N', \Phi}$ is defined as $\cost_{\coal} = \sum_{i \in N_{\coal}} \salary_{\coal}(i) \sum_{\ell: \Phi(t_{\ell}) = i} s_{i, \ell}$.
\end{problem}
 
The project salary model is a generalization of the problem of minimizing makespan on unrelated processors. To the best of our knowledge, this problem has not been stated before; thus we formally define it below.

\begin{problem}[\textsc{FFTPS}: Find Feasible Teams, Project Salary]\label{prob:finding2}
The instances of the problem are the same as in the \textsc{FFTSM} problem, except that in the \textsc{FFTPS} problem we additionally specify that the cost of the team $\cost_{N', \Phi}$ is defined as $\cost_{N', \Phi} = \sum_{i \in N'}\iminsalary{i}$.
\end{problem}

An easier variant of the problem, in which the goal is to optimize the assignment only (assuming that the processors are already selected) has a 2-approximation algorithm. However, adding the notion of the budget usually significantly increases the complexity. We believe that the approximability of \textsc{FFTPS} is a very appealing problem.

\subsection{FFTPS: Hardness Results}

First, we show the NP-hardness of \textsc{FFT}-Scheduling  in restricted special cases.

\begin{theorem}\label{thm:hardnessNP1}
\textsc{FFTPS} and \textsc{FFTHS} are NP-hard even for two agents.
\end{theorem}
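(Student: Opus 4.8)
The goal is to show that \textsc{FFCPS} and \textsc{FFCHS} are NP-hard already with $|N| = 2$. The plan is to reduce from \textsc{Partition}: given positive integers $a_1, \dots, a_q$ with $\sum_{j} a_j = 2S$, decide whether there is a subset summing to exactly $S$. I would build an instance with exactly two agents, one task $t_j$ per input integer $a_j$, and processing times chosen so that, on either agent, task $t_j$ costs (roughly) $a_j$ time units; the deadline $d$ is then set so that a schedule meeting the deadline on \emph{both} agents is possible if and only if the $a_j$ can be split into two parts each of total size at most $S$ --- which, since the total is $2S$, forces an exact \textsc{Partition}.

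\textbf{Key steps.} First I would fix the reduction precisely: set $s_{1,j} = s_{2,j} = a_j$ for every task, set $d = S$, and choose the budget $v$ large enough that it never binds (e.g.\ $v = \iminsalary{1} + \iminsalary{2}$, with $\iminsalary{1} = \iminsalary{2} = 1$), so that the only obstruction is the makespan constraint on each of the two processors. Second, I would argue the forward direction: a \textsc{Partition} witness $T' \subseteq \{1,\dots,q\}$ with $\sum_{j \in T'} a_j = S$ yields a feasible coalition by assigning the tasks in $T'$ to agent $1$ and the rest to agent $2$; each processor's load is exactly $S = d$, the budget is met since $\cost_{N,\Phi} = 2 \leq v$ in the project-salary case and $\cost_{N,\Phi} = 1 \cdot S + 1 \cdot S = 2S \leq v$ in the hourly case (picking $v = 2S$ there), so the coalition is feasible. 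Third, the reverse direction: any feasible coalition must assign every task (the project needs all $q$ tasks done), the two agents' loads sum to $\sum_j a_j = 2S$, and each load is at most $d = S$, which forces both loads to equal exactly $S$, giving a \textsc{Partition}. I would state the argument once and note the only change between \textsc{FFCPS} and \textsc{FFCHS} is the bookkeeping of the budget (a constant in the first case, $2S$ in the second), so the same construction handles both. Finally, I would remark that the reduction is polynomial and that a single-agent subcase is trivially in P (one agent either can or cannot do all tasks within $d$), so ``two agents'' is tight.

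\textbf{Main obstacle.} The construction itself is short; the only thing requiring care is making sure the budget constraint is genuinely slack (or exactly tight in a harmless way) so that the hardness comes purely from the scheduling/makespan side rather than from an interaction between cost and assignment --- in particular, in the hourly model the cost depends on the assignment through the $t_i = \sum_{\ell:\Phi(t_\ell)=i} s_{i,\ell}$ terms, so I must verify that \emph{every} assignment of all tasks has the same total hourly cost $\iminsalary{1}\cdot(\text{load}_1) + \iminsalary{2}\cdot(\text{load}_2) = \text{load}_1 + \text{load}_2 = 2S$ when the minimal salaries are equal, which decouples cost from the split and leaves \textsc{Partition} as the sole difficulty. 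A secondary point to be careful about is that the problem as defined requires \emph{all} tasks to be processed (so one cannot drop an awkward task), which is what pins the combined load to $2S$; I would make this explicit in the reverse direction.
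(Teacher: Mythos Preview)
Your proposal is correct and follows essentially the same reduction from \textsc{Partition} as the paper: two identical agents with $s_{1,j}=s_{2,j}=a_j$, deadline $d=S=\tfrac{1}{2}\sum_j a_j$, unit minimal salaries, and a budget set large enough to be slack. Your write-up is in fact more careful than the paper's---you make explicit the hourly-salary observation that the total cost $\iminsalary{1}\cdot\text{load}_1+\iminsalary{2}\cdot\text{load}_2=2S$ is independent of the split, and you note that the single-agent case is trivially in $\p$, which the paper does not mention.
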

\begin{proof}
The proof is by reduction from the partition problem. In the partition problem, we are given a set of integers $\{ n_j \}$; we ask whether there exists a partition of this set into two subsets $S_1, S_2$, such that $\sum_{n_j \in S1} n_j = \sum_{n_j \in S_2} n_j$. To construct an instance of the feasible team problem, we construct a project that has a task for each $n_j$, an unlimited budget and a deadline $d= 1/2 \sum n_j$. We take two agents $a$ and $b$ with processing speeds $s_{a,j} = s_{b,j} = n_j$ and unit costs: $\iminsalary{a} = \iminsalary{b} = 1$. A feasible team corresponds with partitioning numbers into two with equal sums.
\end{proof}

\begin{theorem}\label{thm:hardnessNP2}
\textsc{FFTPS} is NP-hard even if the agents can be assigned no more than 3 tasks, if each agent has no more than 3 skills (for each $j$ we have that $\|\{i: s_{i,j} \neq \infty \}\| \leq 3$), if the deadline is constant, and if the minimal salaries of the agents are equal 1.
\end{theorem}
\begin{proof}
The proof is by reduction from the exact set cover problem. In the exact set cover problem we are given a set of elements $T = \{t_1, t_2, \dots, t_q\}$ and family $\mathcal{S} = \{S_1, S_2, \dots, S_n\}$ of $3$-element subsets of $T$. We ask whether there exist $\frac{q}{3}$ subsets from $\mathcal{S}$ that cover all the elements from $T$. The exact set cover problem is NP-hard even if each member of $T$ appears in at most 3 sets from $\mathcal{S}$.

We build an instance of the feasible team problem in the following way. There are $q$ tasks and $n$ agents; for each agent $i$ and each task $t_j$ we have that $s_{i, j} = 1$ if and only if $t_j \in S_i$. Otherwise, $s_{i, j} = \infty$. The deadline $d$ is equal to 3. The minimal salary of each agent is 1 and the budget $v$ to $\frac{q}{3}$. It is easy to check that there exists a feasible team if and only if there exists a cover of $T$ with $\frac{q}{3}$ sets.
\end{proof}

\begin{theorem}\label{thm:hardnessNP3}
\textsc{FFTHS} is NP-hard even if the agents can be assigned no more than 4 tasks, if each agent has no more than 4 skills (for each $j$ we have that $\|\{i: s_{i,j} \neq \infty \}\| \leq 4$), if the deadline is constant, and if the minimal salaries of the agents are equal 1.
\end{theorem}
\begin{proof}
The proof is by reduction from the exact set cover problem. We are given a set of elements $T = \{t_1, t_2, \dots, t_q\}$ and family $\mathcal{S} = \{S_1, S_2, \dots, S_n\}$ of $3$-element subsets of $T$. We assume that each member of $T$ appears in at most 3 sets from $\mathcal{S}$.

We build an instance $I$ of the feasible team problem in the following way. There are $q + n$ tasks and $2n$ agents. The first $q$ tasks $t_1, t_2, \dots, t_q$ correspond to the elements in $T$. The next $n$ tasks $t_{q+1}, t_{q+2}, \dots t_{q+n}$ are the dummy tasks needed by our construction. The first $n$ agents $1, 2, \dots, n$ correspond to the subsets from $\mathcal{S}$ and the next $n$ agents $(n+1), (n+2), \dots, 2n$ are the dummy agents. The minimal salaries of all agents are equal to 1. 

For each agent $i$, $i \leq n$ and each task $t_j$, $j \leq q$, we set $s_{i, j} = 2$ if and only if $t_j \in S_i$; otherwise $s_{i, j} = \infty$. Also, for each agent $i$, $i \leq n$ and each task $t_j$, $j > q$ we set $s_{i, j} = 5$ if and only if $i = j-q$; otherwise $s_{i, j} = \infty$. For each agent $i$, $i > n$ and each task $t_j$ we set $s_{i, j} = 6$ if and only if $i-n = j-q$; otherwise $s_{i, j} = \infty$. The deadline $d$ is equal to $6$ and the budget $v$ is equal to $v = \frac{7}{3}q+5n$. Clearly, each agent has no more than 4 skills and so, in any feasible solution, cannot be assigned more than 4 tasks.

We will show that the answer to the original instance of the exact set cover problem is ``yes'' if and only if there exists a feasible team in the our constructed instance $I$.

$\Longleftarrow$ Let us assume there exists a feasible team $\coal$. The cost of this team is at most equal to $v = \frac{7}{3}q+5n$. Each non-dummy task (there are $q$ such tasks) takes 2 time units, and thus implies the cost equal to $2$. The dummy tasks can be assigned either to non-dummy agents (implying the cost 5) or to dummy agents (implying the cost 6). Thus, we infer that at most $\frac{q}{3}$ dummy agents are assigned a task ($2q + \frac{1}{3}q\cdot 6 + (n - \frac{1}{3}q) \cdot 5 = v$). As the result at least $(n - \frac{q}{3})$ dummy tasks must be assigned to non-dummy agents. A non-dummy agent, who is assigned a dummy task cannot be assigned any other task (otherwise the completion time would exceed the deadline). Thus, at most $\frac{q}{3}$ non-dummy agents can be assigned non-dummy tasks. The non-dummy tasks can be assigned only to non-dummy agents. We see the subsets corresponding to these non-dummy agents who are assigned non-dummy tasks form the solution to the initial exact set 
cover problem.

$\Longrightarrow$ Let us assume that there exists the exact set cover in the initial problem. The agents corresponding to the subsets from the cover can be assigned tasks so that the deadline is not exceeded and the total cost of completing these tasks is equal to $2q$. The other $(n - \frac{q}{3})$ non-dummy agents can be assigned one dummy task each. Finally, not-yet assigned dummy tasks can be assigned to dummy agents. The total cost of such assignment is equal to  $2q + (n - \frac{1}{3}q) \cdot 5 + \frac{1}{3}q\cdot 6 = v$.

This completes the proof.
\end{proof}

Unfortunately, \textsc{FFTPS} is not approximable for makespan, for budget, and even for the combination of both these parameters. 

\begin{theorem}\label{thm:nonApprox}
For any $\alpha, \beta \geq 1$ there is no polynomial $\alpha$-$\beta$-approximation algorithm for \textsc{FFTPS} that approximates makespan with the ratio $\alpha$ and budget with the ratio $\beta$, unless P=NP. This result holds even if the costs of all processors are equal 1.
\end{theorem}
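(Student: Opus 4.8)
The plan is to reduce from \textsc{Set Cover}. The point is that for any fixed constants $\alpha,\beta$ the arbitrary‑factor inapproximability cannot be coming from the makespan side (minimizing makespan on unrelated machines is $2$‑approximable), so it has to come from the budget side, and the budget side of \textsc{FFCPS} with unit processor costs is essentially \textsc{Set Cover}: the budget is the number of agents used, i.e.\ the number of chosen ``sets''. Concretely, given a \textsc{Set Cover} instance consisting of a universe $U=\{u_1,\dots,u_q\}$, a family $\mathcal F=\{F_1,\dots,F_m\}$ of subsets of $U$, and an integer $k$, I would build an \textsc{FFCPS} instance with one unit‑length task $t_j$ per element $u_j$, one agent per set $F_\ell$ with minimal salary $\iminsalary{\ell}=1$, and $s_{\ell,j}=1$ if $u_j\in F_\ell$ and $s_{\ell,j}=\infty$ otherwise. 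Set the deadline deliberately loose, $d:=q$, and the budget $v:=k$. Since every task has unit length, any assignment of all tasks to any set of agents gives each agent at most $q$ tasks and hence makespan $\le q=d$; thus a subset $N'$ of agents admits a valid assignment of all tasks exactly when the corresponding sets cover $U$, and a $(d,v)$‑feasible coalition exists if and only if $\mathcal F$ has a cover of size at most $k$.

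The key observation is that this equivalence survives the bicriteria relaxation. Because tasks remain of unit length, any complete assignment still has makespan $\le q\le\alpha d$, so the factor‑$\alpha$ slack on the makespan is worthless in the constructed instances; what remains is exactly the number of agents, so the constructed \textsc{FFCPS} instance admits a coalition of makespan $\le\alpha d$ and cost $\le\beta v$ if and only if $\mathcal F$ has a cover of size at most $\lfloor\beta k\rfloor$. Now suppose a polynomial $\alpha$‑$\beta$‑approximation algorithm $A$ existed. Run $A$ on the constructed instance and check whether its output is a coalition completing all tasks with makespan $\le\alpha d$ and cost $\le\beta v$. If $\mathcal F$ has a cover of size $\le k$, a $(d,v)$‑feasible coalition exists, so by the approximation guarantee $A$ must return such a coalition; if every cover of $\mathcal F$ has size $>\beta k$, then no coalition with cost $\le\beta v$ can even cover $U$, so $A$ cannot return one. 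Hence $A$ distinguishes ``$\mathcal F$ has a cover of size $\le k$'' from ``every cover of $\mathcal F$ has size $>\beta k$'', i.e.\ $A$ $\beta$‑approximates \textsc{Set Cover}. Since $\beta$ is a constant and \textsc{Set Cover} is NP‑hard to approximate within any constant factor (a consequence of the $(1-o(1))\ln q$ inapproximability, which holds under P$\ne$NP), this forces P$=$NP. All processor costs in the construction are $1$, giving the last sentence of the theorem.

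The main obstacle is, somewhat anticlimactically, just to make the makespan criterion genuinely harmless: one has to verify that in \emph{every} instance produced by the reduction the unrelaxed makespan bound is already met by any complete assignment (choosing $d\ge q$ does this), so that all the difficulty is concentrated in the cover size; after that the statement is a black‑box consequence of \textsc{Set Cover} inapproximability. If one prefers to avoid invoking that result, I would remark that the elementary X3C reduction underlying Theorem~\ref{thm:hardnessNP2} already settles the case $\beta=1$ for every $\alpha$: a ``no'' instance of X3C produces an \textsc{FFCPS} instance in which no coalition of cost $\le v$ can complete all tasks at all, so any algorithm respecting the budget must fail on it while succeeding, within any makespan factor $\alpha$, on ``yes'' instances. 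The genuinely new content for $\beta>1$ is precisely that the gap in the cover size has to be amplified beyond the trivial $\tfrac{r+1}{r}$ that X3C gives, which is exactly what the reduction from \textsc{Set Cover} supplies.
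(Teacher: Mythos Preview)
Your proposal is correct and follows essentially the same approach as the paper: both reduce from \textsc{Set Cover} with unit-cost agents corresponding to sets and unit-length tasks corresponding to elements, set the deadline so that any cover yields makespan $\le q$, and argue that an $\alpha$-$\beta$-approximation for \textsc{FFCPS} would $\beta$-approximate \textsc{Set Cover}, contradicting Feige's inapproximability. The only cosmetic difference is that the paper uses $s_{i,j}=\alpha q+1$ for non-covering pairs (so that any wrong assignment already pushes makespan past $\alpha q$), whereas you use $s_{\ell,j}=\infty$; both choices make the makespan slack irrelevant and force the returned coalition to be a genuine cover.
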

\begin{proof}
For the sake of contradiction let us assume that there exists $\alpha$-$\beta$-approximation algorithm $A$. 
We provide a reduction showing that $A$ can be used as $\beta$-approximation algorithm for \textsc{SetCover}, a contradiction with well-known lower bound of $\mathrm{lm}(n)$ on approximating \textsc{SetCover}. Let $I$ be an instance of \textsc{SetCover}, where $T = \{t_1, t_2, \dots, t_q\}$ is the set of elements and $\mathcal{S} = \{S_1, S_2, \dots, S_n\}$ is the set of the subsets of $T$. We ask whether there exists $K$ subsets from $\mathcal{S}$ that together cover all elements from $T$.

From $I$ we construct an instance of \textsc{FFTPS} in the following way. There are $q$ tasks corresponding to $q$ elements in $I$. There are $n$ agents $1, 2, \dots, n$ corresponding to the subsets in $\mathcal{S}$. The duration $s_{i,j}$ of the task $t_i$ when processed by the agent $j$ is defined in the following way. If $t_i \in S_j$ then $s_{i,j} = 1$. Otherwise, $s_{i,j} = \alpha q + 1$. The minimal salary of each agent is equal to 1 and the total budget is $K$.
We show that if there exist $K$ subsets from $\mathcal{S}$ covering $T$ then we can use $A$ to find $\beta K$ subsets covering $T$.

Let $C$ denote the covering using $K$ subsets. If we assign each task $t_i$ to any agent $j$ such that $S_j \in C$ and $t_i \in S_j$, then the completion time of the tasks on each processor will be at most equal to $q$. In such case we will use only $K$ processors. Thus $A$ returns the solution with the makespan at most equal to $\alpha q$ using at most $\beta K$ processors. This, however, means that each task $t_i$ is assigned to such agent $j$ that $t_i \in S_j$. Thus, the subsets corresponding to the selected processors form the solution of $I$. Of course, there is at most $\beta K$ such processors. This completes the proof.
\end{proof}

Theorems~\ref{thm:hardnessNP1},~\ref{thm:hardnessNP2},~and~\ref{thm:hardnessNP3} show that the problems \textsc{FFTPS} and \textsc{FFTHS} remain NP-hard even if various parameters are constant. Although Theorem~\ref{thm:hardnessNP1} gives us NP-hardness even for 2 agents, it is somehow not satisfactory as we used the fact that the deadline $d$ can be very large. If the deadline is given in unary encoding, we can solve the case for 2 agents by dynamic programming. Thus, it is interesting if we can solve the problem efficiently for small numbers of agents, if the input is given in unary encoding. We use parameterized complexity theory to approach this problem. We ask if \textsc{FFTPS} and \textsc{FFTHS} have $\fpt$ algorithms for the parameter $n$, the number of the agents, provided the input is given in unary encoding. 

\begin{theorem}
Consider the number of agents as the parameter. \textsc{FFTPS} and \textsc{FFTHS} are W[1]-hard, even if all the agents have minimal salaries equal to 1, and if the size of the input is given in unary encoding.
\end{theorem}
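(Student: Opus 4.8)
The plan is to reduce from \textsc{Bin Packing} with unary item sizes, parameterized by the number of bins, which is $\wone$-hard by the result of Jansen, Kratsch, Marx, and Schlotter; equivalently, this is the problem of deciding whether $q$ jobs with unary processing times $p_1,\dots,p_q$ fit on $k$ identical machines with makespan at most $B$, with $k$ as the parameter. Given such an instance, I would build an instance of \textsc{FFCPS} with $n=k$ identical agents, $q$ tasks, $s_{i,j}=p_j$ for every agent $i$ (so all agents share the same skill vector), minimal salaries $\iminsalary{i}=1$, deadline $d=B$, and budget $v=k$. Since the agents are identical and cost $1$, the cost of a coalition using a set $N'$ is just $|N'|$, so a feasible coalition is precisely a choice of at most $v=k$ machines together with an assignment $\Phi$ of all $q$ tasks to them under which every machine's load is at most $d=B$; such a coalition exists iff the bin-packing instance is a yes-instance. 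All numbers ($p_j$, $B$, $v$, $n$) are written in unary and the new parameter $n=k$ equals the old one, so this is an fpt-reduction and gives $\wone$-hardness of \textsc{FFCPS}.

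For \textsc{FFCHS} I would reuse the same construction. The key observation is that when all agents are identical with minimal salary $1$, the hourly cost of any coalition $\coal$ is $\cost_\coal=\sum_{i\in N_\coal}\salary_{\coal}(i)\sum_{\ell:\Phi(t_\ell)=i}s_{i,\ell}=\sum_{\ell=1}^{q}p_\ell$ once we set each $\salary_{\coal}(i)=\iminsalary{i}=1$, i.e.\ it does not depend on the assignment $\Phi$ at all. Hence, taking the budget $v=\sum_{\ell=1}^{q}p_\ell$ (still unary) makes the budget constraint vacuous, and a feasible coalition is again exactly an assignment of the $q$ tasks to the $n=k$ identical agents with every load at most $d=B$. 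Thus the same bin-packing instance is a yes-instance iff the constructed \textsc{FFCHS} instance admits a feasible coalition, with parameter and unary encoding both preserved.

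The only real content is importing the right base hardness: one needs that \textsc{Bin Packing} (equivalently, makespan minimization on identical machines) is $\wone$-hard for the number of bins/machines even with unary sizes, which is the nontrivial ingredient and is exactly the Jansen--Kratsch--Marx--Schlotter theorem. If one wanted a self-contained argument, the main obstacle would be to re-prove that hardness, for instance by an fpt-reduction from \textsc{Multicolored Clique} that encodes the vertex- and edge-choices of the $k$ color classes as jobs of carefully tuned unary sizes so that a makespan-respecting packing is forced to select a consistent clique; this is where all the technical work lives. Given the cited result, however, the reductions above are immediate, precisely because with identical agents and unit salaries both \textsc{FFCPS} and \textsc{FFCHS} literally contain makespan scheduling on identical machines as a special case.
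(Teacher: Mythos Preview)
Your proposal is correct and essentially identical to the paper's own proof: both reduce from \textsc{Unary Bin Packing} parameterized by the number of bins (citing the Jansen--Kratsch--Marx--Schlotter result), create one identical unit-cost agent per bin with $s_{i,j}$ equal to the item size, set the deadline to the bin capacity, and take budget $v=n$ for \textsc{FFCPS} and $v=\sum_j p_j$ for \textsc{FFCHS}. Your write-up is in fact slightly more explicit than the paper's in justifying why the hourly-salary cost is assignment-independent and hence the budget constraint becomes vacuous in the \textsc{FFCHS} case.
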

\begin{proof}
We show the reduction from Unary Bin Packing (which is $\wone$-hard). In the instance of the unary bin packing problem we are given a set $T$ of $q$ items $T = \{t_1, t_2, \dots, t_q\}$ (the size of the item $t_i$ is equal to $s_i$) and a set $N$ of $n$ bins, each having a capacity $d$. We ask whether it is possible to pack all the items to the bins.

From this instance we can construct the instance of \textsc{FFTPS} (or \textsc{FFTHS}) in the following way. Here $T$ will be the set of tasks, $N$ will be the set of agents. The minimal salaries of the agents are equal to 1; the speed of processing the task $t_j$ by the agent $i$ is equal to $s_{i, j} = s_j$ . In \textsc{FFTPS} we set the total budget $v$ to be equal to $n$. In \textsc{FFTHS} we set $v$ to $\sum_{t_i \in T}s_i$. Of course, there exists a feasible schedule if and only if there exists a feasible bin-packing.  
\end{proof}

\subsection{Integer Programming Formulation}
In this subsection we state the FFTPS problem as an integer programming problem for the hourly salary model.

\begin{alignat}{2}
\text{minimize }   & d\ \label{eq:ip_1}\\
\text{subject to } & \sum_{i \in N} a_i \iminsalary{i} \leq v\ \label{eq:ip_2}\\
                   & x_{i, j} \leq a_i\ &,\ & i \in N \label{eq:ip_3}\\
                   & \sum_{t_j \in T} x_{i, j} s_{i, j} \leq d\ &,\ & i \in N \leq d \label{eq:ip_4}\\
                   & x_{i, j} \in \{0, 1\}\ &,\ & i \in N; t_j \in T \label{eq:ip_5}\\
                   & a_i \in \{0, 1\}\ &,\ & i \in N \label{eq:ip_6}
\end{alignat}

In the above formulation, a binary variable $a_i$ denotes whether agent $i$ is a part of the solution (is assigned some tasks, Equation~\ref{eq:ip_6}). A binary variable $x_{i, j}$ is equal to 1 if and only if the task $t_j$ is assigned to the agent $i$ (Equation~\ref{eq:ip_5}). We minimize the makespan $d$ (Equation~\ref{eq:ip_1}), which is the maximal completion time of the tasks over all the agents (Inequality~\ref{eq:ip_4}). We cannot exceed the budget $v$ (Inequality~\ref{eq:ip_2}), and the tasks can be assigned only to the selected agents (Inequality~\ref{eq:ip_3}).

\end{document}